\documentclass[a4paper,10pt]{article}
\voffset=-2cm \hoffset=-2cm \textwidth=16cm \textheight=24cm
\usepackage{graphics} 
\usepackage{subfigure}
\usepackage{graphicx,color}
\usepackage[active]{srcltx}
\usepackage{latexsym}
\usepackage{times} 
\usepackage{amsmath} 
\usepackage{amssymb}  
\usepackage{amsmath,amssymb,amsfonts,theorem}
\usepackage{enumerate}
\DeclareMathOperator*{\bigcart}{\raisebox{-0.3ex}{\text{\Large$\times$}}}
{ \theorembodyfont{\normalfont} 

\def\stopmodif{\color{black}} 



\newcommand{\R}{\mathbb{R}}



\renewcommand{\S}{\mathcal{S}}    

\newcommand{\sign}{{\rm sign}}


\newtheorem{remark}{Remark}

\newtheorem{corollary}{Corollary}
\newtheorem{proposition}{Proposition}


\newcommand{\dst}{\displaystyle}
\def\be{\begin{equation}}
\def\ee{\end{equation}}
\def\ba{\begin{array}}
\def\ea{\end{array}}
\def\eqa{\begin{eqnarray}}
\def\eqe{\end{eqnarray}}

\title{\LARGE \bf
...*
}
\date{}
\title{Formation control with binary information}
\author{Matin Jafarian  and  Claudio De Persis
\thanks{M.~Jafarian and C.~De Persis are with ITM, Faculty of Mathematics and
Natural Sciences, University of Groningen, the
Netherlands, Tel: +31 50 363 3080.
        {\tt\small \{m.jafarian,c.de.persis\}@rug.nl}.
        This work is partially supported by 
the Dutch Organization
for Scientific Research (NWO) under the auspices of the project {\it QUantized Information Control for formation Keeping} (QUICK).}}
\begin{document}
\maketitle

\begin{abstract}
In this paper, we study the problem of formation keeping of a network of strictly passive systems 
when very coarse information is exchanged. We assume that neighboring agents only know whether their relative position is larger or smaller than the prescribed one. This assumption results in very simple control laws that direct the agents closer or away from each other and take values in finite sets. We show that the task of formation keeping while tracking a desired trajectory and rejecting matched disturbances is still achievable under the very coarse information scenario. In contrast with other results of practical convergence with coarse or quantized information, here the control task is achieved exactly.
\end{abstract}
\section{Introduction}
Distributed motion coordination of mobile agents has attracted increasing attention in recent years owing to its wide range of applications from biology and social networks to sensor/robotic networks. Distributed formation keeping control is a specific case of motion coordination which aims at reaching a desired geometrical shape for the position of the agents while 
tracking a desired velocity. This problem has been addressed with different approaches \cite{tanner.et.al.cdc03.I}, \cite{arcak.tac07}, \cite{ren.book}, \cite{bullo.et.al.book}, \cite{mesbahi.egerstedt.book}, \cite{bai.et.al.book}, \cite{ren.book.II}. In problems of formation control, an important component, besides the dynamics of the agents and the graph topology, is the flow of information among the agents. In fact, the usual assumption in the literature on cooperative control is that a continuous flow of perfect information is exchanged among the agents. However, due to the coarseness of sensors and/or to communication constraints, the latter might be a restrictive requirement. To cope with the problem in the case of continuous-time agents' dynamics, the use of distributed quantized feedback control has been  proposed in the literature \cite{cortes.aut06}, \cite{dimarogonas.johansson.aut10}, \cite{ceragioli.et.al.aut11}, \cite{chen.et.al.aut11}. In fact, in formation control by quantized feedback control, 
information is transmitted among the agents whenever measurements cross the thresholds of the quantizer. At these times, the corresponding quantized value taken from a discrete set is transmitted. This allows to deal naturally both with the continuous-time nature of the agents' dynamics and with the discrete nature of the transmission information process without the need to rely on sampled-data models \cite{ceragioli.et.al.aut11}, \cite{depersis.cdc11}, \cite{depersis.jaya}.\\ 
{\it Literature review.} The research on coordination in the presence of quantized or coarse information has mainly focused on discrete-time systems (\cite{kashyap.et.al.aut07}, \cite{frasca.et.al.ijrnc09}, \cite{carli.et.al.ijrnc10}, \cite{AN-AO-AO-JNT:09} to name a few). Motivated by the problem of reaching consensus in finite-time, \cite{cortes.aut06} adopted binary control laws and cast the problem in the context of discontinuous control systems. The paper \cite{chen.et.al.aut11} proposed a consensus control algorithm in which the information collected from the neighbors is in binary form. The paper \cite{dimarogonas.johansson.aut10} has investigated a similar problem but in the presence of quantized measurements, while \cite{ceragioli.et.al.aut11} has rigorously cast the problem in the framework of non-smooth control systems. The latter has also introduced a new class of hybrid quantizers to deal with possible chattering phenomena. Results on the formation control problem for continuous-time models of the agents in the presence of coarse information have also appeared. The authors of \cite{Yu.et.al.2012} study a rendezvous problem for Dubins cars based on a ternary feedback which depends on whether or not the neighbor's position is within the range of observation of the agent's sensor. Deployment on the line of a group of kinematic agents was shown to be achievable by distributed feedback which uses binary information control \cite{depersis.et.al.cdc11}. In \cite{depersis.cdc11}, \cite{depersis.jaya} formation control problems for groups of agents with second-order non-linear dynamics have been studied in the presence of quantized measurements. Leader-follower coordination for single-integrator agents and constant disturbance rejection by proportional-integral controllers with quantized information and time-varying topology has been studied in \cite{xargay.et.al.cdc12}. In this paper, we consider nonlinear agents and we adopt internal model based controllers to track  reference velocities and reject constant and harmonic disturbances. 
On the other hand, we restrict ourselves to static graph topologies. 
\\
{\it Main contribution.} In this paper, we study the problem of distributed position-based formation keeping of a group of agents with strictly-passive dynamics which exchange binary information. The binary information models a sensing scenario in which each agent detects whether or not the components of its current distance vector from a neighbor are above or below the prescribed distance and apply a force (in which each component takes a binary value) to reduce or respectively increase the actual distance. A similar coarse sensing scenario was considered in \cite{Yu.et.al.2012} in the context of the so-called ``minimalist" robotics. \\
 Remarkably, despite such a coarse information and control action, we show that the control law guarantees exact achievement of the desired formation. This is an interesting result, since statically quantized control inputs typically generates practical convergence, namely the achievement of an approximate formation in which the distance from the actual desired formation depends on the quantizer resolution \cite{depersis.cdc11}. Here the use of binary information allows us to conclude asymptotic convergence without the need to dynamically update the quantizer resolution. The use of binary information in coordination problems (\cite{cortes.aut06,chen.et.al.aut11}) has been proven useful to the design and real-time implementation of distributed controls for systems of first- or second-order agents in a cyber-physical environment (see e.g.~\cite{GS-DVD-KHJ:13,CN-JC:12,depersis.frasca.necsys12}). We envision that a similar role will be played by the results in this paper for a  larger class of coordination problems (see \cite{CDP:PF:JH:CDC13} for an early result in this respect).\\
  Another advantage of our approach is that the resulting control laws are implemented by very simple directional commands (such as ``move north", ``move north-east", etc. or ``stay still"). We also show that the presence of coarse information does not affect the ability of the proposed controllers to achieve the formation in a leader-follower setting in which the prescribed reference velocity is only known to the leader. This paper adopts a similar setting as in \cite{depersis.cdc11,depersis.jaya} but controllers and analysis are different. Moreover, the paper investigates the formation control problem with unknown reference velocity tracking and matched disturbance rejection that was not considered in \cite{depersis.cdc11,depersis.jaya}. Compared with \cite{Yu.et.al.2012}, where also coarse information was used for rendezvous, the results in our contribution apply to a different class of systems and to a different cooperative control problem. Early results with the same sensing scenario but for a formation of agents modeled as double integrators have been presented in \cite{mj.cdp.acc13}.\\ 
The paper is organized as follows: Section~\ref{sec:pp} introduces the problem statement along with some motivations and notation. Analysis of the formation keeping problem with coarse data in the case of known/unknown reference velocity is studied in Section ~\ref{sec:vel}. Section \ref{sec:disturbance} investigates the problem of formation keeping with coarse data in the presence of matched disturbances. Related simulations are presented in Section~\ref{sec:sim}. The paper is summarized in Section~\ref{sec:conclusion}.

{\it Notation.} Given two sets $S_1, S_2$, the symbol $S_1\times S_2$ denotes the Cartesian product of two sets. This can be iterated. The symbol 
$\bigcart_{k=1}^m S_k$ 
denotes $S_1 \times S_2 \times \ldots \times S_m$. For a set $S$, ${\rm card}(S)$ denotes the cardinality of the set $S$. Given a matrix $M$ of real numbers, we denote by ${\mathcal R} (M)$ and ${\mathcal N} (M)$ the range and the null space, respectively. The symbols $\mathbf{1}, \mathbf{0}$ denotes vectors or matrices of all $1$ and $0$ respectively. Sometimes the size of the matrix is explicitly given. Thus, $\mathbf{1}_n$ is the $n$-dimensional vector of all $1$. $I_{p}$ is the $p \times p$ identity matrix. Given two matrices $A,B$, the symbol $A\otimes B$ denotes the Kronecker product. 

\section{Preliminaries}
\label{sec:pp}
\subsection{The multi-agent system} 
In this subsection we review the passivity-based approach to multi-agent system control (\cite{arcak.tac07}, see also \cite{buerger.et.al.cdc11,muenz.et.al.tac11,depersis.jaya}). A network of $N$  agents in $\R^{p}$ are considered. For each agent $i$, $x_{i} \in \R^{p}$ represents its position. The communication topology of the network is assumed to be modeled by a connected and undirected graph $G = (V, E)$, where $V$ is a set of $N$ nodes and $E \subseteq V \times V$ is a set of $M$ edges connecting the nodes. 
Label one end of each edge in $E$ with a positive sign and the other end with a negative sign. We define the relative position $z_{k}$ as following
\begin{equation*}
z_{k} = \left\{
\begin{array}{c}
 x_{i} - x_{j}\;\;\;\;\text{if node $i$ is the positive end of the edge k}\\
 x_{j} - x_{i}\;\;\;\;\text{if node $j$ is the positive end of the edge k}\\
\end{array}
\right.
\end{equation*}
where $x_i\in \R^p$ is the position of the agent $i$ expressed in an inertial frame. We define the $N \times M$ incidence matrix $B$ associated with the graph $G$ as follows
\begin{equation*}
b_{ik} = \left\{
 \begin{array}{c}
 +1\;\;\;\;\text{If node $i$ is the positive end of the edge k}\\
 -1\;\;\;\;\text{If node $j$ is the positive end of the edge k}\\
 0\;\;\;\;\;\text{otherwise.}\hspace{45mm}\\
 \end{array}
\right.
\end{equation*}
By definition of $B$, we can represent the relative position variable $z$, with $z \triangleq [z_{1}^{T} \ldots z_{M}^{T}]^{T}$, $z \in \R^{p M}$,  
as a function of the position variable $x$, namely
\begin{equation}
z = (B^{T} \otimes I_{p}) x.
\label{eq:a8}
\end{equation}
Equation (\ref{eq:a8}), implies that $z$ belongs to the range space $\mathcal{R} (B^{T} \otimes I_{p})$.\\
Each agent is expected to track its desired (time-varying) reference velocity denoted by $v_{i}^{r}$. Define the velocity error 
\begin{equation}
y_{i} = \dot{x_{i}} - v_{i}^{r}.
\label{eq:a4}
\end{equation}
 The dynamics of each agent is given by 
\begin{equation}
\mathcal{H}_{i}: \left\{
 \begin{array}{c}
  \dot{\xi_{i}} = f_{i} ( \xi_{i} ) + g_{i} ( \xi_{i} ) u_{i}\\
  y_{i} = h_{i} ( \xi_{i} )\hspace{12mm}\\
 \end{array}
\right.
\label{eq:a5}
\end{equation}
where $\xi_{i} \in \R^{n_{i}}$ is the state variable, $u_{i} \in \R^{p}$ is the control input, $y_{i} \in \R^{p}$ is the velocity error, and the exogenous signal $v_{i}^{r} \in \R^{p}$ is the reference velocity for the agent $i$. The maps $f_{i}$, $g_{i}$ and $h_{i}$ are assumed to be locally Lipschitz such that $f_{i} ( \mathbf 0 ) = \mathbf 0$, $g_{i} ( \mathbf 0 )$ is full column-rank, and $h_{i} ( \mathbf 0 ) = \mathbf 0$.\\ 
The system $\mathcal{H}_{i}$ is assumed to be strictly passive from the input $u_{i}$ to the velocity error $y_{i}$. Since the system $\mathcal{H}_{i}$ is strictly passive, there is a continuously differentiable storage function $S_{i}: \R^{n_{i}} \rightarrow \R_{+}$ which is positive definite and radially unbounded and satisfies
\begin{equation}
\frac{\partial S_{i}}{\partial \xi_{i}} [ f_{i}( \xi_{i} ) + g_{i} ( \xi_{i} ) u_{i} ] \leq - W_{i} ( \xi_{i} ) + {y_{i}}^{T} u_{i}
\label{eq:passivity}
\end{equation}
where $W_{i}$ is a continuous positive function and $W_{i} ( \mathbf 0 ) = \mathbf 0$.\\
For the sake of conciseness, the equations (\ref{eq:a4}) and (\ref{eq:a5}) are written in the  compact form
\begin{equation}
\begin{split}
 \dot{x} &= \underbrace{ \left(
 \begin{array}{c}
 h_{1} ( \xi_{1} )\\
 \hspace{0.5mm}\vdots\\
 h_{N} ( \xi_{N} )
 \end{array}\right)}_{h ( \xi )} + \underbrace{\left(
 \begin{array}{c}
 v_{1}^{r}\\
 \hspace{0.5mm}\vdots\\
 v_{N}^{r}
 \end{array} \right)}_{v^r}
 \\
 \dot{\xi} &= \underbrace{ \left(
 \begin{array}{c}
 f_{1} ( \xi_{1} )\\
 \hspace{0.5mm}\vdots\\
 f_{N} ( \xi_{N} )
 \end{array} \right)}_{f ( \xi )} + \underbrace{ \left(
 \begin{array}{ccc}
 g_{1} ( \xi_{1} ) & \ldots & \mathbf 0 \\
 \vdots & \ddots & \vdots\\
 \mathbf 0 & \ldots & g_{N} ( \xi_{N} ) \end{array} \right)}_{g ( \xi )} u
\\
y &= h(\xi).
\end{split}
\label{eq:a7}
\end{equation}
We define the concatenated vectors $x \triangleq [x_{1}^{T} \ldots x_{N}^{T}]^{T}$, $x \in \R^{p N}$, $x_{i} \in \R^{p}$, and $\xi \triangleq [\xi_{1}^{T} \ldots \xi_{N}^{T}]^{T}$, $\xi \in \R^{p N}$, $\xi_{i} \in \R^{p}$. The desired position  $x^{*}$, with $x^{*} \triangleq [{x_{1}^{*}}^{T} \ldots {x_{N}^{*}}^{T}]^{T}$, $x^{*} \in \R^{p N}$, $x_{i}^{*} \in \R^{p}$ is given by $x^* = x^*_0 +\mathbf{1}_N\otimes \int_0^t v^*(s) ds$, with $x^*_0$ a constant vector. Analogously, $z^{*} \triangleq [{z_{1}^{*}}^{T} \ldots {z_{M}^{*}}^{T}]^{T}$, $z^{*} \in \R^{p M}$,  $z_{i}^{*} \in \R^{p}$, is the desired relative position vector. The two vectors are related by the identity $z^{*}=(B^T\otimes I_p) x^*$.
%
\subsection{Problem Statement} 
Our goal is to design coordination control laws to attain the following collective behaviors for the formation of agents (\ref{eq:a4}), (\ref{eq:a5}): 
\begin{enumerate}[(i)]
\item The velocity of each agent of the network asymptotically converges to the desired  reference velocity ${v^*}(t) \in \R^{p}$
\begin{equation}
\lim_{t\to\infty} || \dot{x_{i}}(t) - {v^*}(t) || = \mathbf 0,\;\; i = 1,...,N.
\label{eq:a2}\end{equation}
\item Each relative position vector $z_{k}$ converges to the desired relative position vector $z_{k}^{*}$
\begin{equation}
\lim_{t\to\infty} || z_k - z_{k}^{*}||= \mathbf 0, \;\; z_{k}^{*} \in \R^{p},\; k = 1,...,M.
\label{eq:a3}
\end{equation}
\item 
In the presence of matched disturbances, i.e.~in the case in which the equation (\ref{eq:a5}) is replaced by
\[
\dot{\xi_{i}} = f_{i} ( \xi_{i} ) + g_{i} ( \xi_{i} ) (u_{i}+d_i)
\]
where $d_i$ is a disturbance signal generated by a suitable  exosystem (see Section \ref{sec:disturbance}), the collective behaviors in (i) and (ii) are still guaranteed. 
\end{enumerate}
Differently from other work in formation control,  we are interested in control laws that achieve complex coordination tasks using very coarse information about the relative positions of the agents. We assume that the agents of the network are equipped with sensors that are capable to detect whether the relative position of two agents is above or below a prescribed one.  
In line with our  goals, for each agent $i$, the control law is designed as
\begin{equation}
u_{i} = -\sum_{k = 1}^{M} b_{ik}\ \sign (z_{k} - z_{k}^{*}),\;\;\;i = 1 \;...\; N,
\label{eq:b1}
\end{equation}
where $\sign(\cdot)$ is the sign function and operates on each element of the $p$-dimensional vector $z_{k} - z_{k}^{*}$. The sign function $\sign(\cdot): \R \rightarrow \{ -1, +1 \}$ is defined as follows: 
\[
\sign( \zeta )=\left\{\ba{lll}
+1 & {\rm if} & \zeta \geq 0\\
-1 & {\rm if} & \zeta < 0.
\ea\right.
\]
Observe that, $b_{ik} \neq 0$ if and only if the edge $k$ connects the agent $i$ to one of its neighbors. This implies the control law $u_{i}$ uses only the information available to the agent $i$. The above control law is inspired by the passivity-based control design proposed in \cite{arcak.tac07}. 
%

\noindent {\it Discussion about the control (\ref{eq:b1}).}
The proposed control (\ref{eq:b1}) has the following  interpretation. Let the position of the agents be given with respect to an inertial frame in $\R^{p}$. Assign to each agent a local Cartesian coordinate system with an orientation identical to the inertial frame such that the current position of the agent is the origin of its local coordinate system. Therefore, there are $p$ 
mutually perpendicular hyperplanes intersecting at the origin of the agent's local coordinate system and  partitioning the state space into $2^{p}$ hyperoctants. 
For each neighbor of the agent $i$, the corresponding $p$-dimensional vector $z_{k} - z_{k}^{*}$ lies in one of these hyperoctants (possibly at the boundary) and contributes the term $- b_{ik} \sign (z_{ik} - z_{ik}^{*})$ to the control $u_i$. 
The implementation of (\ref{eq:b1}) only requires the agent to detect when $z_{k} - z_{k}^{*}$ is crossing the boundary between hyperoctants.\footnote{In that regard, the control law is reminiscent of the controllers of \cite{Yu.et.al.2012} that changes when a neighbor  enters or leaves 	the field-of-view sector of the agent.} 
A few other advantages of this sensing scenario are discussed below: \\
(i) The formation is achieved with large inaccuracies in the measurements of $z_k-z_{k}^\ast$. As a matter of fact, the calculated control vector corresponding to a given vector $-b_{ik}(z_k-z_{k}^\ast)$ 
is the same no matter where $(z_k-z_{k}^\ast)$ precisely lies in the hyperoctant.\\
(ii) If the agent $i$ detects that the vector $z_{k} - z_{k}^{*}$ 
is crossing the boundary of any two hyperoctants, then a new control vector is applied  due to the term $- b_{ik}\sign (z_{k} - z_{k}^{*})$ in (\ref{eq:b1}).
In our analysis below, however, we show that the exact formation is still achieved if this control vector at the boundary takes any value in the convex hull of the two control vectors associated with the two hyperoctants. In this respect the control (\ref{eq:b1}) is robust to possible uncertainties in the precise detection of the boundary crossing.  \\
(iii) The proposed control law guarantees the achievement of a desired formation by very simple directional commands, namely move closer if the actual position is larger than the desired one, move away if smaller.\\
(iv) The input (\ref{eq:b1}) provides a {\em finite}-valued control that uses {\em binary} measurements. In the case of networks of kinematic agents, the finite-valued  control law of \cite{cortes.aut06} 
inspired the design of self-triggered control algorithms (see \cite{depersis.frasca.necsys12}). These algorithms are shown to reduce the amount of time during which sensors are active and to reduce the amount of information exchanged among the agents. Moreover, they can be used to overcome possible fast switching of the controllers (see discussion in Section \ref{sec:sim}).  
The control investigated in this paper can pave the way towards the design of self-triggered cooperative control of nonlinear systems.  A first step in this regard has been taken in \cite{CDP:PF:JH:CDC13}. 
%

\section{Analysis}
\label{sec:vel}
In this section we investigate the stability properties of the closed-loop system. We define the variable $\tilde{z} = z - z^{*}$ and write the control laws $u_{i}$, $i=1,2,\ldots, N$, in the following compact form
\begin{equation}
u = - (B \otimes I_{p}) \sign\ \tilde{z}.
\label{eq:u}
\end{equation}
 From equation (\ref{eq:a8}), we can write $\tilde{z} = ( B^{T} \otimes I_{p} ) ( x - x^{*} )$.
 We consider the stability of the origin of the error system, with 
 state 
 variables $[ {\xi}^{T}\ \ {\tilde{z}}^{T} ]^{T}$. To derive the error system equations, we 
 take the derivative of $\tilde z$, thus obtaining 
\begin{equation}
\dot{\tilde{z}} = ( B^{T} \otimes I_{p} ) ( \dot{x} - {\dot x}^{*} ).
\label{eq:b3}
\end{equation}
From equation (\ref{eq:a7}), we further have
\begin{equation}
\dot{x} - {\dot x}^{*} = h(\xi) + v^r - {\mathbf 1}_{N} \otimes v^*
\label{eq:b4}
\end{equation}
where $v^*$ is the desired reference velocity for the  formation. By a property of the incidence matrix of a connected graph, $( B^{T} \otimes I_{p} ) ( {\mathbf 1}_{N} \otimes v^* ) = \mathbf{0}$. 
Thus, we represent the dynamics of the error system in  the following general form
\begin{equation}
\begin{array}{c}
\begin{aligned}
 \dot{\tilde{z}} &= (B^{T} \otimes I_{p}) {\dot{x}}= (B^{T} \otimes I_{p}) (h(\xi) + v^r)\\[3mm]
 \dot{\xi} &= f(\xi) + g(\xi) (-B \otimes I_{p}) \sign\ \tilde{z}. 
\end{aligned}
\end{array}
\label{eq:b5}
\end{equation}
The system (\ref{eq:b5}) has a discontinuous right-hand side due to the discontinuity of the sign function at zero. Before analyzing the system, we first define an appropriate notion of the solution. In this paper,  the solutions to the system above are intended in the Krasowskii sense. 
As in \cite{ceragioli.et.al.aut11,depersis.jaya}, the motivation to consider these solutions lies in the fact that there exists convenient Lyapunov methods to analyze their asymptotic behavior and that they include other notions of solutions such as Carath\'eodory solutions if the latter exist. 
Let $X = ( \tilde z, \xi)$ and let $F(X)$ be the set-valued map 
\begin{equation}
\ba{rcl}
F (X) &=& \left(\begin {array}{c}
 (\!B^T\otimes I_p\!) (h(\xi) + v^r)\\
 f(\xi)
 \end{array}\right) 
-\left(\begin{array}{c}
 \mathbf{0}_{Np\times 1}\\
 g(\xi) (\!B\otimes I_p\!) 
 \end{array}\right){\mathcal K} \sign\ \tilde z
\label{eq:F}
\ea
\end{equation}
where ${\mathcal K}\; \sign\ \tilde z = \bigcart_{k=1}^{ M\stopmodif} \bigcart_{\ell=1}^{ p\stopmodif}{\mathcal K}\; \sign\ \tilde z_{k\ell}$ and 
\begin{equation}
{\mathcal K}\; \sign\ \tilde z_{k\ell} = \left\{\begin{array}{lll}
\{\sign\ \tilde z_{k\ell}\} & if & \tilde z_{k\ell}\ne 0\\
~[-1,1] & if & \tilde z_{k\ell}= 0.
\end{array}\right.
\end{equation}
We define $X(t)=(\tilde z(t), \xi(t))$ a Krasowskii solution to (\ref{eq:b5}) on the interval $[0,\;t_1]$ if it is an absolutely continuous function  which satisfies the differential inclusion 
\begin{equation}
\dot {X}(t) \in F(X( t ))
\label{eq:b7}
\end{equation}
for almost every $t\in [0, t_1]$, with $F$ defined as in (\ref{eq:F}), and
\begin{equation}
{\mathcal K}(F(X)) = \bigcap_{\delta >0} \overline{{\rm co}} (F(B(x,\delta))),
\label{eq:KF}
\end{equation}
with $\overline{{\rm co}}$ the closed convex hull of a set and $B(x,\delta)$ the ball centered at $x$ and of radius $\delta$. Local existence of Krasowskii solutions to the differential inclusion above is always guaranteed (see e.g.~\cite{hajek} and \cite{ceragioli.et.al.aut11}, Lemma 3). 
For the convenience of the reader, a brief reminder about non-smooth control theory tools adopted in this paper is given in Appendix A. 
\subsection{Known reference velocity}
In this section, we consider the case where the desired reference velocity, ${v^*}(t)$ is known to all the agents, namely $ v^r = {\mathbf 1}_{N} \otimes v^*$. Then, the closed-loop system simplifies as
\begin{equation}
\begin{array}{c}
\begin{aligned}
 \dot{\tilde{z}} &= ( B^{T} \otimes I_{p} ) h(\xi)\hspace{25mm}\\[3mm]
 \dot{\xi} &= f( \xi ) + g( \xi ) ( - B \otimes I_{p}) \sign\ \tilde{z}.
\end{aligned}
\end{array}
\label{eq:cl1}
\end{equation}
\begin{proposition}\label{p1}
Any Krasowskii solution to (\ref{eq:cl1}) exists for all $t \geq 0$ and converges to the origin.
\end{proposition}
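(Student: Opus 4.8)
The plan is to use a Lyapunov/LaSalle argument adapted to the non-smooth (Krasowskii) setting. The natural candidate Lyapunov function combines the passivity storage functions of the agents with a term penalizing the relative-position error. Specifically, I would try
\begin{equation*}
V(\tilde z,\xi) = \sum_{i=1}^{N} S_i(\xi_i) + \sum_{k=1}^{M}\sum_{\ell=1}^{p} |\tilde z_{k\ell}|,
\end{equation*}
that is, the sum of storage functions plus the $\ell_1$-norm of $\tilde z$. The $\ell_1$-term is the ``integral'' of the $\sign$ nonlinearity and is exactly what the binary control is the gradient of, so its derivative along trajectories should produce the cross-term that cancels against the passivity inequality. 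This $V$ is positive definite and radially unbounded in the combined state, guaranteeing boundedness of solutions and hence global existence once we show $\dot V\le 0$.

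The key computation is to bound the set-valued derivative $\dot{\overline V}$ along the differential inclusion (\ref{eq:b7})–(\ref{eq:F}). For the storage part, the chain rule gives $\sum_i \frac{\partial S_i}{\partial\xi_i}\dot\xi_i$, and substituting $\dot\xi = f(\xi)+g(\xi)(-B\otimes I_p)\,\eta$ for any selection $\eta\in\mathcal K\,\sign\tilde z$, together with the passivity inequality (\ref{eq:passivity}) with input $u=-(B\otimes I_p)\eta$, yields
\begin{equation*}
\sum_{i=1}^N \frac{\partial S_i}{\partial\xi_i}\dot\xi_i \le -\sum_{i=1}^N W_i(\xi_i) - h(\xi)^T (B\otimes I_p)\,\eta .
\end{equation*}
For the $\ell_1$-part, since $|\tilde z_{k\ell}|$ is convex its Clarke gradient is $\mathcal K\,\sign\tilde z_{k\ell}$, and using $\dot{\tilde z}=(B^T\otimes I_p)h(\xi)$ one finds that the contribution is exactly $\eta^T(B^T\otimes I_p)h(\xi) = h(\xi)^T(B\otimes I_p)\eta$, for the \emph{same} selection $\eta$. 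The two cross-terms therefore cancel and we are left with $\dot{\overline V}\subseteq\{-\sum_i W_i(\xi_i)\}\le 0$. The care needed here is to verify that the same $\eta$ governs both terms of the inclusion (it does, since $\mathcal K$ acts componentwise and the same multiplier appears in both rows of $F$), so that the cancellation is valid set-valued and not merely formal.

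Given $\dot{\overline V}\le 0$, I would invoke the non-smooth LaSalle invariance principle (as recalled in the appendix) to conclude that every Krasowskii solution converges to the largest weakly invariant set contained in $\{X : 0\in\dot{\overline V}(X)\}$. On this set $W_i(\xi_i)\equiv 0$; since each $\mathcal H_i$ is \emph{strictly} passive I expect $W_i$ to vanish only at $\xi_i=\mathbf 0$ (or at least to force $h_i(\xi_i)=y_i=\mathbf 0$), giving $\xi\equiv\mathbf 0$, hence $\dot{\tilde z}=(B^T\otimes I_p)h(\xi)=\mathbf 0$. The main obstacle, and the step I would spend the most effort on, is closing the loop on $\tilde z$: invariance forces $\xi$ to stay at $\mathbf 0$, so $\dot\xi\equiv\mathbf 0$ must hold, which through the dynamics $f(\mathbf 0)+g(\mathbf 0)(-B\otimes I_p)\eta=\mathbf 0$ and the full-column-rank of $g(\mathbf 0)$ forces $(B\otimes I_p)\eta=\mathbf 0$ for an admissible selection $\eta\in\mathcal K\,\sign\tilde z$. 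I would then use connectedness of the graph — so that $B\otimes I_p$ has the appropriate rank and null-space structure — to argue that no nonzero $\tilde z$ in $\mathcal R(B^T\otimes I_p)$ admits such a selection, i.e.\ that $\tilde z=\mathbf 0$ is the only point compatible with weak invariance. This rank/null-space compatibility argument for the discontinuous $\sign$ term is the genuinely delicate part; the Lyapunov decrease itself is routine once the cancellation is set up.
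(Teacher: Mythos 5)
Your proposal follows essentially the same route as the paper's proof: the same Lyapunov function $\|\tilde z\|_1 + S(\xi)$, the same selection-matching cancellation of the cross term against the passivity inequality, the same LaSalle step giving $\xi=\mathbf{0}$ on the invariant set, and the same reduction, via full column rank of $g(\mathbf 0)$, to $(B\otimes I_p)\eta=\mathbf{0}$ for some selection $\eta\in\mathcal{K}\,\sign\tilde z$. The only step you leave open (and flag as delicate) is closed in the paper by a two-line argument that needs no connectedness or rank count: $\eta\in\mathcal{N}(B\otimes I_p)$ is orthogonal to $\tilde z\in\mathcal{R}(B^T\otimes I_p)$, so $\langle\tilde z,\eta\rangle=0$, yet every term of $\sum_{k,\ell}\tilde z_{k\ell}\eta_{k\ell}$ is nonnegative and strictly positive whenever $\tilde z_{k\ell}\neq 0$ (since then $\eta_{k\ell}=\sign\tilde z_{k\ell}$), forcing $\tilde z=\mathbf{0}$.
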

\begin{proof}
The proof is based on the application of the non-smooth La Salle's invariance principle (\cite{bacciotti.ceragioli.esaim99}). Consider the following locally Lipschitz Lyapunov function
\[
V(\tilde z, \xi)= ||\tilde z||_1+ S(\xi)
\]
where $||\tilde z||_1$ denotes the $1$-norm $||\tilde z||_1=\sum_{k,\ell} |\tilde z_{k\ell}|$ and evaluate its set-valued derivative $\dot{\overline V}(\tilde z, \xi)$ along (\ref{eq:cl1}). We have
\begin{equation*}
\dot{\overline V}(\tilde z, \xi)= \{a\in \R\,:\, \exists w\in F(\tilde z, \xi)\ \ {\rm s.t.}\ 
a=\langle w, p \rangle,\  \mbox{for all} \ \ p\in \partial V (\tilde z, \xi)\}.
\end{equation*}
By the definition of $F(\tilde z, \xi)$ in (\ref{eq:F}), for any 
$w\in F(\tilde z, \xi)$ there exists $w^{\tilde z}\in {\mathcal K}\sign\;\tilde z$ such that
\[
w= \left(\ba{c}
(B^T\otimes I_p) h(\xi)\\
f(\xi)
\ea\right)
-\left(\ba{c}
\mathbf{0}\\
g(\xi)\ (B\otimes I_p) 
\ea\right)w^{\tilde z}.
\]
Observe that the Clarke generalized gradient $\partial V (\tilde z, \xi)$ is given by
\begin{equation*}
\partial V (\tilde z, \xi)=\{
p: p=\left(\ba{c} p^{\tilde z}\\ \nabla S(\xi)\ea\right)\; {\rm s.t.} \ \ 
p^{\tilde z}_{k\ell}\in \left\{\ba{lll}\{\sign\;{\tilde z}_{k\ell}\} & if & {\tilde z}_{k\ell}\ne 0\\
~[-1,+1] & if & {\tilde z}_{k\ell}= 0
\ea\right.\}.
\end{equation*}
Suppose that $\dot{\overline V}(\tilde z, \xi)\ne\emptyset$ and take $a\in \dot{\overline V}(\tilde z, \xi)$. Then by definition there exists $w\in F(\tilde z, \xi)$ such that $a=\langle w, p \rangle$ for all $p\in \partial V (\tilde z, \xi)$. 
Choose $p\in \partial V (\tilde z, \xi)$ such that $p^{\tilde z}=w^{\tilde z}$. Thus
\[\ba{rcl}
a&=&\langle \left(\!\!\ba{c}
(B^T\otimes I_p) h(\xi)\\
f(\xi)
\ea\!\!\right)
-\left(\!\!\ba{c}
\mathbf{0}\\
g(\xi)\ (B\otimes I_p) 
\ea\!\!\right)w^{\tilde z},
\left(\!\!\ba{c} w^{\tilde z}\\ \nabla S(\xi)\ea\!\!\right)
\rangle\\\\
&=&
\langle (\!B^T\otimes I_p\!) h(\xi), w^{\tilde z}\rangle
+\langle f(\xi) - g(\xi) (\!B\otimes I_p\!) w^{\tilde z}, \nabla S(\xi)\rangle\\\\
&\leq& -W(\xi) + 
\langle h(\xi), (B\otimes I_p) w^{\tilde z}\rangle
 + \langle (B^T\otimes I_p) h(\xi), w^{\tilde z}\rangle
 \\[2mm]
  &\leq& -W(\xi)
\ea\]
Hence, for any state $(\tilde z, \xi)$ such that $\dot{\overline V}(\tilde z, \xi)\ne\emptyset$, $\dot{\overline V}(\tilde z, \xi)=\{a \in \R: a \leq -W(\xi)\}\subseteq (-\infty, 0]$. 
Therefore, the solutions can be extended for all $t\ge 0$ and by La Salle's invariance principle they converge to the largest weakly invariant set where $\xi=\mathbf{0}$. Since $ v^r = {\mathbf 1}_{N} \otimes v^*$, from (\ref{eq:F}), any point $(\tilde z, \mathbf{0})$ on this invariant set must necessarily satisfy
\[
\mathbf{0} =\left(\ba{c}
\mathbf{0}\\
g(\mathbf{0}) (B\otimes I_p) 
\ea\right)
w^{\tilde z}
\]
for some $w^{\tilde z}\in {\mathcal K} \sign\;\tilde z$. Since $g(\mathbf{0})$ is full-column rank, this implies 
$w^{\tilde z} \in {\mathcal N}(B\otimes I_p)$. Bearing in mind that $\tilde z\in {\mathcal R}(B^T\otimes I_p)$, then $\langle \tilde z, w^{\tilde z}\rangle=0$. 
We claim that necessarily $\tilde z=\mathbf{0}$. In fact, by contradiction, $\tilde z\ne \mathbf{0}$ would imply the existence of at least a component $\tilde z_{k\ell}\ne 0$. As $\tilde z_{k\ell}\ne 0$ necessarily implies $w^{\tilde z}_{k\ell}\ne 0$ by definition of $w^{\tilde z}\in {\mathcal K}\sign\; \tilde z$, then it would imply $\tilde z_{k\ell} w^{\tilde z}_{k\ell}> 0$, which shows the contradiction. Hence, on the invariant set, $\tilde z=\mathbf{0}$ and all the system's solutions converge to the origin.
\end{proof}
\subsection{Unknown reference velocity}\label{sec.urv}
In the previous section, we assumed that the desired reference velocity is known to all of the agents. This assumption is very restrictive and one wonders whether it is possible to relax this assumption despite the very coarse information scenario we are confining ourselves. The positive answer to this question is given in the analysis below.\\
The problem of the reference velocity unknown to the agents is tackled here in the scenario in which at least one of the agents is aware of $v^\ast$. We refer to this agent as the leader (\cite{ren.book}, \cite{mesbahi.egerstedt.book}). As in \cite{bai.et.al.book}, we further assume that the class of reference velocities that the formation can achieve are those generated by an autonomous system that we refer to as the exosystem. Given two matrices $\Phi, \Gamma^v$, whose properties will be made precise later on, the exosystem obeys the following equations
\be\label{exosv}
\dot {w^v}= \Phi {w^v},\quad {v^\ast}=\Gamma^v {w^v}.
\ee 
Taking inspiration from the theory of output regulation (see e.g.~\cite{isidori.et.al.book}), 
an internal-model-based controller can be adopted for each agent $i=2,3,\ldots, N$, namely 
\be\label{imv} 
\ba{rcll}
\dot{\eta}_i &=& \Phi \eta_i +G \tilde u_i\\
v^r_i &=& \Gamma^v \eta_i, & i= 2,3,\ldots, N.
\ea\ee
When $\breve u_i=\mathbf 0$ and the system is appropriately initialized, the latter system is able to generate any $w^v$ solution to (\ref{exosv}). Because $v^\ast$ is known to the agent $1$, there is no need to implement system (\ref{imv}) at the agent $1$. The input $\tilde u_i$, as well as the input matrix $G$ are to be designed later. Define ${v^r}=({v^\ast}^T\ {v^r_2}^T\ \ldots\ {v^r_n}^T)$ and the new error variables\\
\[
\tilde \eta_i =\eta_i -w^v,\quad \tilde v_i = v^r_i - v^\ast,\quad i=2,\ldots, N,
\]
along with the fictitious variables $\tilde \eta_1 =\mathbf 0$, $\tilde v_1=\mathbf 0$. We obtain
\[\ba{rcl}
\dot{\tilde \eta} =
\left(\ba{c}
\dot{\tilde \eta}_1\\
\dot{\tilde \eta}_2\\
\ldots\\
\dot{\tilde \eta}_n
\ea\right)
=
\underbrace{
\left(\ba{cccc}
\mathbf{0} & \mathbf{0} & \ldots & \mathbf{0}\\
\mathbf{0} & \Phi & \ldots & \mathbf{0}\\
\vdots & \vdots & \ddots & \vdots \\
\mathbf{0} & \mathbf{0} & \ldots & \Phi\\
\ea\right)}_{\tilde \Phi}
\tilde \eta + 
\underbrace{
\left(\ba{cccc}
\mathbf{0} & \mathbf{0} & \ldots & \mathbf{0}\\
\mathbf{0} & G & \ldots & \mathbf{0}\\
\vdots & \vdots & \ddots & \vdots \\
\mathbf{0} & \mathbf{0} & \ldots & G\\
\ea\right)}_{\tilde G}
\underbrace{
\left(\ba{c}
\tilde u_1\\
\tilde u_2\\
\ldots\\
\tilde u_n
\ea\right)}_{\tilde u} \\\\
\tilde v =
\underbrace{\left(\ba{cccc}
\mathbf{0} & \mathbf{0} & \ldots & \mathbf{0}\\
\mathbf{0} & \Gamma^v & \ldots & \mathbf{0}\\
\vdots & \vdots & \ddots & \vdots \\
\mathbf{0} & \mathbf{0} & \ldots & \Gamma^v\\
\ea\right)}_{\tilde \Gamma^v}\tilde \eta \hspace{65mm}
\ea\]
and in compact form
\be\label{im.tilde} 
\ba{rcll}
\dot{\tilde \eta} &=& {\tilde{\Phi}} {\tilde {\eta}} + \tilde G \tilde u\\
\tilde v &=& {\tilde{\Gamma^v}} {\tilde {\eta}}.
\ea\ee
From (\ref{eq:b5}), the error position vector can be written as 
\begin{equation*}
\ba{rcl}
\dot{\tilde z} &=& (B^T\otimes I_p) (h(\xi)+v^r) \\
&=& (B^T\otimes I_p) (h(\xi)+v^r-\mathbf{1}_n\otimes v^\ast+\mathbf{1}_n\otimes v^\ast)\\
&=&(B^T\otimes I_p) (h(\xi)+\tilde v).\hspace{32mm}
\ea
\end{equation*}
The overall closed-loop system is
\be\label{ocls}\ba{rcl}
\dot {\tilde z} &=& (B^T\otimes I_p) (h(\xi) +\tilde \Gamma^v \tilde \eta)\\
\dot{\xi} &=& f( \xi ) + g( \xi ) u\\
\dot{\tilde \eta} &=& \tilde \Phi \tilde \eta +\tilde G \tilde u.\\
\ea\ee
\begin{proposition}\label{p2}
Assume that $\Phi$ is skew-symmetric, namely
\be\label{Phi}
\Phi^T+\Phi=\mathbf{0},
\ee
$(\Gamma^v, \Phi)$ is an observable pair 
and let
\be\label{breve.G}
 G={\Gamma^v}^T.
\ee
Then all the Krasowskii solutions to (\ref{ocls}) in closed-loop with the control input 
\be\label{controlwd}
u=\tilde u = - (B\otimes I_p) \sign\; \tilde z
\ee
converge to the point $\tilde z=\mathbf{0}$, $\xi=\mathbf{0}$, $\tilde \eta=\mathbf{0}$. 
\end{proposition}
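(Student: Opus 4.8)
The plan is to follow the blueprint of Proposition~\ref{p1}, enlarging the Lyapunov function by a quadratic term in the internal-model error $\tilde\eta$ that is tailored to the skew-symmetry (\ref{Phi}) and to the choice (\ref{breve.G}). I would take the locally Lipschitz, positive definite and radially unbounded candidate
\[
V(\tilde z, \xi, \tilde\eta) = \|\tilde z\|_1 + S(\xi) + \tfrac{1}{2}\,\tilde\eta^T\tilde\eta,
\]
and evaluate its set-valued derivative $\dot{\overline V}$ along (\ref{ocls}). Its Clarke gradient is $\partial V(\tilde z,\xi,\tilde\eta) = \{p=(p^{\tilde z},\nabla S(\xi),\tilde\eta): p^{\tilde z}_{k\ell}\in\mathcal K\sign\tilde z_{k\ell}\}$, and for each $w\in F$ with associated selection $w^{\tilde z}\in\mathcal K\sign\tilde z$ I would, exactly as in Proposition~\ref{p1}, pick $p$ with $p^{\tilde z}=w^{\tilde z}$, so that any $a\in\dot{\overline V}$ is obtained from this single inner product.

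The decisive step is the cancellation of the cross terms. Moving $B^T\otimes I_p$ across the inner product, the $\tilde z$-row contributes $\langle h(\xi)+\tilde\Gamma^v\tilde\eta,(B\otimes I_p)w^{\tilde z}\rangle$; the $\xi$-row is bounded, via the passivity inequality (\ref{eq:passivity}), by $-W(\xi)-\langle h(\xi),(B\otimes I_p)w^{\tilde z}\rangle$; and in the $\tilde\eta$-row the quadratic part $\langle\tilde\Phi\tilde\eta,\tilde\eta\rangle$ vanishes by (\ref{Phi}), leaving $-\langle\tilde G(B\otimes I_p)w^{\tilde z},\tilde\eta\rangle = -\langle(B\otimes I_p)w^{\tilde z},\tilde\Gamma^v\tilde\eta\rangle$, since (\ref{breve.G}) gives $\tilde G^T=\tilde\Gamma^v$. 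The two $h(\xi)$ terms and the two $\tilde\Gamma^v\tilde\eta$ terms then cancel in pairs, yielding $a\le -W(\xi)$ exactly as in Proposition~\ref{p1}. Hence $\dot{\overline V}\subseteq(-\infty,0]$, the sublevel sets of $V$ are compact, solutions exist for all $t\ge 0$, and the non-smooth La~Salle invariance principle applies.

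It then remains to characterize the largest weakly invariant set contained in $\{W(\xi)=0\}=\{\xi=\mathbf 0\}$ and to show it reduces to the target equilibrium. On this set $\xi\equiv\mathbf 0$, so $\dot\xi\equiv\mathbf 0$, and since $f(\mathbf 0)=\mathbf 0$ and $g(\mathbf 0)$ has full column rank this forces $(B\otimes I_p)w^{\tilde z}=\mathbf 0$; combined with $\tilde z\in\mathcal R(B^T\otimes I_p)$, the contradiction argument of Proposition~\ref{p1} gives $\tilde z=\mathbf 0$. I would then exploit invariance once more: with $\tilde z\equiv\mathbf 0$ and $h(\mathbf 0)=\mathbf 0$, the first equation of (\ref{ocls}) forces $\dot{\tilde z}=(B^T\otimes I_p)\tilde\Gamma^v\tilde\eta=\mathbf 0$, i.e.\ $\tilde v=\tilde\Gamma^v\tilde\eta\in\mathcal N(B^T\otimes I_p)$. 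For a connected graph $\mathcal N(B^T\otimes I_p)=\{\mathbf 1_N\otimes c: c\in\R^p\}$, and since the leader's block satisfies $\tilde v_1=\mathbf 0$ we get $c=\mathbf 0$, hence $\tilde v\equiv\mathbf 0$. Finally, because $(B\otimes I_p)w^{\tilde z}=\mathbf 0$, the $\tilde\eta$-dynamics on the set reduce to $\dot{\tilde\eta}=\tilde\Phi\tilde\eta$, so $\Gamma^v\tilde\eta_i(t)=\Gamma^v e^{\Phi t}\tilde\eta_i(0)\equiv\mathbf 0$; observability of $(\Gamma^v,\Phi)$ then yields $\tilde\eta_i\equiv\mathbf 0$ for every $i$. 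Thus the invariant set is the single point $(\tilde z,\xi,\tilde\eta)=(\mathbf 0,\mathbf 0,\mathbf 0)$, to which every solution converges.

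I expect the main obstacle to lie not in the derivative computation, which is essentially forced by the design choices (\ref{Phi})--(\ref{breve.G}), but in the invariant-set analysis: one must use weak invariance twice — first to extract $(B\otimes I_p)w^{\tilde z}=\mathbf 0$ and $\tilde z=\mathbf 0$, and then, crucially, to convert $\tilde z\equiv\mathbf 0$ into the persistent identity $\tilde v\equiv\mathbf 0$ so that the observability of $(\Gamma^v,\Phi)$ can be invoked along the free trajectory $\dot{\tilde\eta}=\tilde\Phi\tilde\eta$. Care is also needed to justify that along solutions lying in the set these identities hold for almost every $t$, so that the time-domain observability argument is legitimate.
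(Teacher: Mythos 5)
Your proposal is correct and follows essentially the same route as the paper's own proof: the same Lyapunov function $\|\tilde z\|_1 + S(\xi) + \tfrac{1}{2}\tilde\eta^T\tilde\eta$, the same cancellation of cross terms via skew-symmetry of $\Phi$ and $\tilde G^T=\tilde\Gamma^v$, and the same invariant-set analysis forcing $(B\otimes I_p)w^{\tilde z}=\mathbf 0$, then $\tilde z=\mathbf 0$, then $\tilde\Gamma^v\tilde\eta\in\mathcal N(B^T\otimes I_p)$ with the leader's zero block giving $\Gamma^v\tilde\eta_i=\mathbf 0$, and finally observability of $(\Gamma^v,\Phi)$ along $\dot{\tilde\eta}=\tilde\Phi\tilde\eta$ yielding $\tilde\eta=\mathbf 0$. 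Your closing remark about invoking weak invariance twice is precisely the structure of the paper's argument, so no gap remains.
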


\begin{proof}
Any Krasowskii solution to (\ref{ocls}) with the control law defined by (\ref{controlwd}) satisfies a differential inclusion of the form (\ref{eq:F}) where $X= (\tilde z,\xi, \tilde \eta)$ and 
\begin{equation*}
F(\tilde z,\xi, \tilde \eta)=\left(\ba{c}
(B^T\otimes I_p) (h(\xi)+\tilde \Gamma ^v \tilde \eta) \\
f(\xi) \\
\tilde \Phi\tilde \eta
\ea\right) - 
\left(\ba{c}
\mathbf{0}\\
g(\xi) (B\otimes I_p) \\
\tilde G (B\otimes I_p) 
\ea\right){\mathcal K}\,\sign\; \tilde z.
\end{equation*}
Let
\[
V(\tilde z,\tilde v, \tilde \eta) = 
||\tilde z||_1+ S(\xi)+ \frac{1}{2} \tilde \eta^T \tilde \eta.
\]
For any
$w\in F(\tilde z, \xi, \tilde \eta)$, there exists $w^{\tilde z}\in {\mathcal K}\sign\; \tilde z$ such that
\[
w=
\left(\ba{c}
(B^T\otimes I_p) (h(\xi)+\tilde \Gamma ^v \tilde \eta) \\
f(\xi)\\
\tilde \Phi\tilde \eta
\ea\right)
-\left(\ba{c}
\mathbf{0}\\
g(\xi) (B\otimes I_p) \\
\tilde G\;(B\otimes I_p) 
\ea\right)w^{\tilde z}.
\]
Moreover for any $p\in \partial V(\tilde z, \xi, \tilde \eta)$ there exists $p^{\tilde z}\in \partial ||\tilde z||_1$, with 
\[
p^{\tilde z}_{k\ell}\in 
\left\{\ba{lll}
\{\sign\;{\tilde z}_{k\ell}\} & if & {\tilde z}_{k\ell}\ne 0\\
~[-1,+1] & if & {\tilde z}_{k\ell}= 0,
\ea\right.
\]
and $k=1,2,\ldots,m$, $\ell=1,2,\ldots, p$, such that
\[
p=\left(\ba{c}
p^{\tilde z}\\ \nabla S(\xi) \\ \tilde \eta
\ea\right).
\]
For $w^{\tilde z}\in {\mathcal K}\sign\; \tilde z$, let $p\in \partial V(\tilde z, \xi, \tilde \eta)$ such that $w^{\tilde z}=p^{\tilde z}$. Hence
\[
\ba{l}
\langle w, p\rangle = \langle (B^T\otimes I_p) (h(\xi)+\tilde \Gamma ^v \tilde \eta), w^{\tilde z}\rangle \\[2mm]
\hspace{1.5cm} +\langle f(\xi) - g(\xi) (B\otimes I_p)w^{\tilde z}, \nabla S(\xi)\rangle\\[2mm]
\hspace{1.5cm} +\langle \tilde \Phi\tilde \eta-\tilde G\; (B\otimes I_p)w^{\tilde z}, \tilde \eta\rangle. \\[2mm]
\ea
\]
By the conditions (\ref{Phi}) and (\ref{breve.G}), the equality above simplifies as 
\[
\ba{l}
\langle w, p\rangle =\langle(B^T\otimes I_p) h(\xi), w^{\tilde z}\rangle + \langle f(\xi) - g(\xi) (B\otimes I_p)w^{\tilde z}, \nabla S(\xi)\rangle.
\ea
\]
Since the dynamic of each agent is strictly passive, we have
\begin{equation*}
\langle f(\!\xi) - g(\!\xi) (\!B\otimes I_p\!)w^{\tilde z}, \nabla S(\!\xi)\rangle \leq -W(\!\xi)
-\langle h(\!\xi), (\!B\otimes I_p\!)w^{\tilde z}\rangle
\end{equation*}
which implies $\langle w, p\rangle \leq -W(\xi)$.\\ 
Hence, similarly to the proof of Proposition \ref{p1}, $\dot{\overline{V}}(\tilde z, \xi, \tilde \eta)=\{a \in \R: a \leq -W(\xi)\} \subseteq (-\infty, 0]$, provided that $\dot{\overline{V}}(\tilde z, \xi, \tilde \eta)\ne \emptyset$. Because $V$ is positive definite and  proper, solutions are bounded and one can apply La Salle's invariance principle, which shows that every Krasowskii solution converges to the largest weakly invariant set where $W(\xi)=\mathbf{0}$. By definition of $W(\xi)$, the latter is equivalent to have  convergence to the largest weakly invariant set where $\xi=\mathbf{0}$. To characterize this set, we look for points $(\tilde z, \mathbf{0}, \tilde \eta)$ such that $(\dot{\tilde z}, \mathbf{0}, \dot {\tilde \eta})\in F(\tilde z, \mathbf{0}, \tilde \eta)$. Having $f(\mathbf{0})=h(\mathbf{0})=\mathbf{0}$, this means that we seek points $(\tilde z, \mathbf{0}, \tilde \eta)$ for which there exists $w^{\tilde z}\in{\mathcal K}\,\sign\; \tilde z$ such that 
\[
\left(\!\ba{c}
\dot{\tilde z}\\
\mathbf{0}\\
\dot{\tilde\eta}
\ea
\!\right)
=
\left(\!\ba{c}
(B^T\otimes I_p) (\tilde \Gamma ^v \tilde \eta) \\
\mathbf{0}\\
\tilde \Phi\tilde \eta
\ea\!\right)
-\left(\!\ba{c}
\mathbf{0}\\
g(\mathbf{0}) (B\otimes I_p) \\
(\tilde \Gamma ^v)^T (B\otimes I_p) 
\ea\!\right)w^{\tilde z}. 
\]
The equality  $\mathbf{0}= -g(\mathbf{0})(B\otimes I_p)w^{\tilde z}$, implies that necessarily $\tilde z=\mathbf{0}$ (see the final part of the proof of Proposition \ref{p1}).
This also implies that on the invariant set the system evolves as  
\[\ba{rcl}
\mathbf{0} &=& (B^T\otimes I_p) (\tilde \Gamma ^v \tilde \eta)\\
\dot{\tilde \eta} &=& \tilde \Phi \tilde \eta.
\ea\]
From this point on, the argument is the same as in \cite{bai.et.al.book}. Indeed, 
as ${\mathcal N}(B^T)={\mathcal R}(\mathbf{1}_n)$, by the block diagonal structure of $\tilde \Gamma ^v $, we obtain that all the sub vectors of $\tilde \Gamma ^v \tilde \eta$ must be the same and equal to zero, since the first $p$ component of $\tilde \Gamma ^v \tilde \eta$ are identically zero. In other words, $\Gamma^v \tilde \eta_i=\mathbf{0}$ for all $i=2,3, \ldots, N$. Hence we conclude that on the largest weakly invariant set where $\xi=\mathbf{0}$, we have 
\[\ba{rcll}
\mathbf{0} &=& \Gamma ^v \tilde \eta_i\\
\dot{\tilde \eta}_i &=& \Phi \tilde \eta_i, & i=2,3, \ldots, N.
\ea\]
By the observability of $(\Gamma^v, \Phi)$, it holds that $\tilde \eta_i=\mathbf{0}$ for all $i=2,3, \ldots, N$. This finally proves the claim. 
\end{proof}

\section{Formation control with matched disturbance rejection}
\label{sec:disturbance}
In this section, we study the problem of formation keeping with very coarse exchanged information in the presence of matched input disturbances. 
We consider a formation  of strictly passive agents, where the dynamics of each agent of the networked system is
\begin{equation*}
\dot{x}_{i}= y_i + v_{i}^{r}(t)\hspace{9mm}
\end{equation*}
\be\mathcal{H}_{i}: \left\{
 \begin{array}{c}
 \begin{aligned}
  \dot{\xi_{i}} &= f_{i} (\xi_{i}) + g_{i} (\xi_{i}) (u_i + d_i)\\
  y_{i} &= h_{i} (\xi_{i}),\\
  \end{aligned}
 \end{array}
\right.
\label{eq:d1}
\ee
where $d_i$ is a disturbance. 
The network should converge to the prescribed formation and evolve with the given desired reference velocity $v^\ast$ despite the action of the disturbance $d$. As in the previous section, we first consider the case in which  the reference velocity is unknown to all the agents except one. As explained in Section ~\ref{sec.urv}, we adopt an internal-model-based controller to recover the desired reference velocity. Similarly we suppose that the disturbance signal $d_i$ at the agent $i$ is generated by an exosystem of the form 
\be\label{exos.d}
\dot w^d_i= \Phi^d_i w^d_i,\quad {d_i}=\Gamma^d_i w^d_i, \quad i=1,2,\ldots,N.
\ee 
To counteract the effect of the disturbance, we introduce an additional internal-model-based controller given by 
\be\label{im.d} 
\ba{rcll}
\dot{\theta}_i &=& \Phi^d_i \theta_i +G^d_i \check{u}_i\\
\hat d_i &=& \Gamma^d_i \theta_i, & i= 1,2,\ldots, N,
\ea\ee
where $G^d_i, \check{u}_i$ will be designed later. 
We remark that the internal model for  disturbance rejection is implemented also at agent 1 (the leader).\\
To both compensate the disturbance and achieve the desired formation, the proposed control law is composed of two parts: $\hat u$ guides the system to the desired formation, and $\hat d$ compensates the disturbance. Therefore, we can write the dynamics of each agent in the following compact form
\be\label{ud}
\begin{aligned}
u &= \hat u - \hat d\\
\dot \xi &= f(\xi) + g(\xi) (\hat u - \hat d +d).
\end{aligned}
\ee
Define the new variable $\tilde d_i= \hat d_i -d_i$ and $\tilde \theta_i =\theta_i -w_i^d$, then\\
\be
\begin{aligned}
\dot{\tilde \theta}_i &= \Phi_i^d \tilde\theta_i+G_i^d \check{u}_i\\
\tilde d_i &= \Gamma_i^d {\tilde \theta}_{i}. 
\end{aligned}
\ee
In compact form we write
\be\label{d.tilde}
\ba{rcl}
\dot{\tilde \theta} =
\left(\ba{c}
\dot {\tilde \theta}_1\\
\dot {\tilde \theta}_2\\
\ldots\\
\dot {\tilde \theta}_n
\ea\right)
=
\underbrace{
\left(\ba{cccc}
\Phi_1^d & \mathbf{0} & \ldots & \mathbf{0}\\
\mathbf{0} & \Phi_2^d & \ldots & \mathbf{0}\\
\vdots & \vdots & \ddots & \vdots \\
\mathbf{0} & \mathbf{0} & \ldots & \Phi_n^d\\
\ea\right)}_{\Phi^d}
\tilde \theta 
 + 
\underbrace{
\left(\ba{cccc}
G_1^d & \mathbf{0} & \ldots & \mathbf{0}\\
\mathbf{0} & G_2^d & \ldots & \mathbf{0}\\
\vdots & \vdots & \ddots & \vdots \\
\mathbf{0} & \mathbf{0} & \ldots & G_n^d\\
\ea\right)}_{G^d}
\underbrace{
\left(\ba{c}
\check u_1\\
\check u_2\\
\ldots\\
\check u_n
\ea\right)}_{\check u} \\\\
\tilde d =
\underbrace{\left(\ba{cccc}
\Gamma^d_1 & \mathbf{0} & \ldots & \mathbf{0}\\
\mathbf{0} & \Gamma^d_2 & \ldots & \mathbf{0}\\
\vdots & \vdots & \ddots & \vdots \\
\mathbf{0} & \mathbf{0} & \ldots & \Gamma^d_n\\
\ea\right)}_{\Gamma^d}\tilde \theta. \hspace{75mm}
\ea\ee
From the equation (\ref{ocls}) together with the equations (\ref{ud}), (\ref{d.tilde}) the overall closed-loop system is
\be\label{ocls.d}\ba{rcl} 
\dot {\tilde z} &=&  (B^T\otimes I_p) (h(\xi)+\tilde  \Gamma^v \tilde \eta)\\
\dot \xi&=& f(\xi) + g(\xi) (\hat u - \Gamma^d \tilde \theta)\\
\dot{\tilde \eta} &=& \tilde \Phi \tilde \eta +\tilde G \tilde u\\
\dot{\tilde \theta} &=& \Phi^d \tilde \theta +G^d \check u\\ 
\ea\ee
The following is proven:
\begin{proposition}\label{prop.dist.rej.+track}
Assume that $\Phi$ and $\Phi^d_i$, for $i=1,2,\ldots, N$ are skew-symmetric matrices. If 
\be\label{choice.d}\ba{rcllrcl}
G &=& (\Gamma^v)^T\\
G^d_i &=& (\Gamma^d_i)^T, 
\ea\ee
then all the Krasowskii solutions to (\ref{ocls.d}) in closed-loop with the control input 
\be\label{control.d}
\hat u=\tilde u = -(B\otimes I_p)\sign {\tilde z},\quad \check u = h(\xi)
\ee
are bounded and converge to the largest weakly invariant set for 
\be\label{tired}
\left(\ba{c}
\dot{\tilde z}\\
\mathbf{0}\\
\dot{\tilde \eta} \\
\dot{\tilde\theta}
\ea
\right)
\in
\left(\ba{c}
(B^T\otimes I_p) \tilde \Gamma^v \tilde \eta\\
-(B\otimes I_p) \mathcal{K}\; {\rm sign} \tilde z - \Gamma^d \tilde \theta \\
\tilde \Phi \tilde \eta -(\tilde \Gamma^v)^T (B\otimes I_p)\mathcal{K}\; \sign {\tilde z}\\
\Phi^d\tilde \theta
\ea\right),
\ee
%
%
such that $ \xi=\mathbf{0}$.
\end{proposition}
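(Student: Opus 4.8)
The plan is to mirror the nonsmooth La~Salle argument used for Propositions~\ref{p1} and~\ref{p2}, augmenting the Lyapunov function by one quadratic term for each internal-model error coordinate. Concretely, I would take
\[
V(\tilde z,\xi,\tilde\eta,\tilde\theta)=\|\tilde z\|_1+S(\xi)+\tfrac12\,\tilde\eta^T\tilde\eta+\tfrac12\,\tilde\theta^T\tilde\theta,
\]
which is positive definite, locally Lipschitz and proper, and evaluate its set-valued derivative along the inclusion $\dot X\in F(X)$, $X=(\tilde z,\xi,\tilde\eta,\tilde\theta)$, associated with (\ref{ocls.d})--(\ref{control.d}). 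As in the earlier proofs, for every $w\in F(X)$ there is a selection $w^{\tilde z}\in\mathcal K\,\sign\tilde z$ realizing $w$, and the Clarke gradient has the form $p=(p^{\tilde z},\nabla S(\xi),\tilde\eta,\tilde\theta)$ with $p^{\tilde z}\in\partial\|\tilde z\|_1$; I would select the $p$ with $p^{\tilde z}=w^{\tilde z}$, so that a single inner product $\langle w,p\rangle$ determines the relevant element of $\dot{\overline V}$.

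The computational heart is to expand $\langle w,p\rangle$ into its four blocks and show that every cross term cancels, leaving only $-W(\xi)$. First, because $\tilde\Phi$ and $\Phi^d$ are block-diagonal with skew-symmetric blocks (the skew-symmetry hypothesis on $\Phi$ and on each $\Phi^d_i$), the quadratic forms $\langle\tilde\Phi\tilde\eta,\tilde\eta\rangle$ and $\langle\Phi^d\tilde\theta,\tilde\theta\rangle$ vanish. Next, using $(B^T\otimes I_p)^T=(B\otimes I_p)$ together with the gain choice $G={\Gamma^v}^T$ of (\ref{breve.G})--(\ref{choice.d}), which yields $\tilde G=(\tilde\Gamma^v)^T$, the two $\tilde\Gamma^v\tilde\eta$ cross terms---one arising from $\dot{\tilde z}$ paired with $w^{\tilde z}$, the other from $\dot{\tilde\eta}$ paired with $\tilde\eta$---cancel exactly, precisely as in Proposition~\ref{p2}. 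Finally, invoking strict passivity (\ref{eq:passivity}) on the $\xi$-block with \emph{actual} input $\hat u-\Gamma^d\tilde\theta$ gives the bound $-W(\xi)-\langle h(\xi),(B\otimes I_p)w^{\tilde z}\rangle-\langle h(\xi),\Gamma^d\tilde\theta\rangle$, and the first penalty cancels the $h(\xi)$ cross term coming from $\dot{\tilde z}$.

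The genuinely new ingredient---and the step I expect to need the most care---is the cancellation of the residual disturbance cross term $\langle h(\xi),\Gamma^d\tilde\theta\rangle$ produced by passivity. This is exactly where the design choices $\check u=h(\xi)$ and $G^d_i={\Gamma^d_i}^T$ in (\ref{choice.d})--(\ref{control.d}) enter: the $\tilde\theta$-block contributes $\langle G^d h(\xi),\tilde\theta\rangle=\langle h(\xi),(G^d)^T\tilde\theta\rangle=\langle h(\xi),\Gamma^d\tilde\theta\rangle$, which annihilates that penalty. One must check that the block-diagonal transposes line up, i.e.\ $(G^d)^T=\Gamma^d$, which is immediate from $G^d_i={\Gamma^d_i}^T$. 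With these three cancellations, $\langle w,p\rangle\le -W(\xi)$ for the chosen $p$, so $\dot{\overline V}(X)\subseteq(-\infty,-W(\xi)]\subseteq(-\infty,0]$ whenever it is nonempty.

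To conclude, since $V$ is proper every Krasowskii solution is bounded and extends to all $t\ge0$, and the nonsmooth La~Salle invariance principle forces convergence to the largest weakly invariant set on which $W(\xi)\equiv\mathbf 0$, equivalently $\xi\equiv\mathbf 0$. On this set I would substitute $\xi=\mathbf 0$ using $f(\mathbf 0)=h(\mathbf 0)=\mathbf 0$: the constraint $\dot\xi=\mathbf 0$ with $g(\mathbf 0)$ of full column rank forces $-(B\otimes I_p)w^{\tilde z}-\Gamma^d\tilde\theta\in\mathcal N(g(\mathbf 0))=\{\mathbf 0\}$, while the remaining blocks reduce to exactly the inclusion (\ref{tired}). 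Unlike Propositions~\ref{p1}--\ref{p2}, I would deliberately not push the characterization further to $\tilde z=\tilde\eta=\tilde\theta=\mathbf 0$: doing so would require additional observability/detectability arguments on the coupled $(\tilde z,\tilde\eta,\tilde\theta)$ dynamics, and the statement is therefore phrased as convergence to the invariant set (\ref{tired}).
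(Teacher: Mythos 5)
Your proposal is correct and follows essentially the same route as the paper's own proof: the identical Lyapunov function $\|\tilde z\|_1+S(\xi)+\tfrac12\tilde\eta^T\tilde\eta+\tfrac12\tilde\theta^T\tilde\theta$, the same selection $p^{\tilde z}=w^{\tilde z}$ in the set-valued derivative, the same three cancellations (skew-symmetry, $\tilde G=(\tilde\Gamma^v)^T$, and $\langle G^d h(\xi),\tilde\theta\rangle$ against the passivity penalty $\langle h(\xi),\Gamma^d\tilde\theta\rangle$), and the same La Salle conclusion of boundedness and convergence to the invariant set where $\xi=\mathbf{0}$. Your closing remark on why the characterization stops at (\ref{tired}) rather than at the origin matches the paper's discussion following the proposition, where the stronger conclusion is shown to hold only in special cases.
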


\begin{proof}
Consider the Lyapunov function
\[
V(\tilde z,\tilde v, \tilde \eta, \tilde \theta) = 
||\tilde z||_1+ S(\xi)+ \frac{1}{2} \tilde \eta^T \tilde \eta+ \frac{1}{2} \tilde \theta^T \tilde \theta.
\]
Any Krasowskii solution to (\ref{ocls.d}) with the control law defined by (\ref{control.d}) satisfies a differential inclusion of the form (\ref{eq:b7}) where $X= (\tilde z,\xi, \tilde \eta, \tilde \theta)$ and 
\begin{equation*}
F(\tilde z,\xi, \tilde \eta, \tilde \theta)=\left(\ba{c}
(B^T\otimes I_p) (h(\xi)+\tilde \Gamma ^v \tilde \eta) \\
f(\xi) - g(\xi) \Gamma^d \tilde \theta \\
\tilde \Phi\tilde \eta\\
\Phi^d \tilde \theta +G^d h(\xi)
\ea\right)- 
\left(\ba{c}
\mathbf{0}\\
g(\xi) (B\otimes I_p) \\
\tilde G (B\otimes I_p)\\
\mathbf{0}
\ea\right){\mathcal K}\,\sign\; \tilde z.
\end{equation*}
For any $w\in F(\tilde z, \xi, \tilde \eta, \tilde \theta)$, there exists $w^{\tilde z}\in {\mathcal K}\sign\; \tilde z$ such that
\[
w=
\left(\ba{c}
(B^T\otimes I_p) (h(\xi)+\tilde \Gamma ^v \tilde \eta) \\
f(\xi) - g(\xi) \Gamma^d \tilde \theta \\
\tilde \Phi\tilde \eta\\
\Phi^d \tilde \theta +G^d h(\xi)
\ea\right)-
\left(\ba{c}
\mathbf{0}\\
g(\xi) (B\otimes I_p) \\
\tilde G (B\otimes I_p)\\
\mathbf{0}\ea\right)w^{\tilde z}.
\]
Moreover for any $p\in \partial V(\tilde z, \xi, \tilde \eta, \tilde \theta)$ there exists $p^{\tilde z}\in \partial ||\tilde z||_1$, with 
\[
p^{\tilde z}_{k\ell}\in 
\left\{\ba{lll}
\{\sign\;{\tilde z}_{k\ell}\} & if & {\tilde z}_{k\ell}\ne 0\\
~[-1,+1] & if & {\tilde z}_{k\ell}= 0,
\ea\right.
\]
and $k=1,2,\ldots,m$, $\ell=1,2,\ldots, p$, such that
\[
p=\left(\ba{c}
p^{\tilde z}\\ \nabla S(\xi) \\ \tilde \eta\\ \tilde \theta
\ea\right).
\]
For $w^{\tilde z}\in {\mathcal K}\sign\; \tilde z$, let $p\in \partial V(\tilde z, \xi, \tilde \eta, \tilde \theta)$ be such that $w^{\tilde z}=p^{\tilde z}$. Hence,
\[
\ba{l}
\begin{aligned}
\langle w, p\rangle &= \langle (B^T\otimes I_p) (h(\xi)+\tilde \Gamma ^v \tilde \eta), w^{\tilde z}\rangle\\[2mm]
&+\langle f(\xi) - g(\xi) ((B\otimes I_p)w^{\tilde z} + \Gamma^d \tilde \theta), \nabla S(\xi)\rangle\\[2mm]
&+\langle \tilde \Phi\tilde \eta-\tilde G\; (B\otimes I_p)w^{\tilde z}, \tilde \eta\rangle + \langle \Phi^d \tilde \theta + G^d h(\xi), \tilde \theta \rangle.\\[2mm]
\end{aligned}
\ea
\]
By the condition (\ref{choice.d}), the above result simplifies as
\be\label{eq.11}
\langle w, p\rangle = \langle (B^T\otimes I_p) h(\xi), w^{\tilde z}\rangle 
+\langle f(\xi) - g(\xi) ((B\otimes I_p)w^{\tilde z} + \Gamma^d \tilde \theta), \nabla S(\xi)\rangle 
+\langle G^d h(\xi), \tilde \theta \rangle.\\[2mm]
\ee
Since the dynamics of each agent is strictly passive, we have
\begin{equation*}
\langle f(\xi) - g(\xi) ((B\otimes I_p)w^{\tilde z}+\Gamma^d \tilde \theta), \nabla S(\xi)\rangle \leq 
 -W(\xi)-\langle h(\xi), (B\otimes I_p)w^{\tilde z}+\Gamma^d \tilde \theta\rangle.
\end{equation*}
When it is replaced in (\ref{eq.11}), it gives $\langle w, p\rangle \leq -W(\xi)$, in view of $\Gamma^d=(G^d)^T$.
As in the previous proofs, this shows $\dot{\overline{V}}(\tilde z,\xi, \tilde \eta, \tilde \theta)=\{a\in \R: a\le -W(\xi)\}$ and hence boundedness of the solutions and their convergence to the largest weakly invariant set for the closed-loop system such that $\xi=\mathbf{0}$. 
This proves the thesis. 
\end{proof}

%

Proposition \ref{prop.dist.rej.+track} only proves boundedness of the solution and convergence of the velocity of each agent to its own reference velocity ($ \xi=\mathbf{0}$). To prove the convergence of the network to the desired formation and evolution of the network with the desired velocity, one should additionally prove that $\tilde z$ converges to the origin. {This result does not hold in general but there are a few cases of interest where this is true. In what follows we study two of these cases. 

{\it Case I: $\Phi=\mathbf{0}$, $\Phi_i^d=\mathbf{0}_{p\times p}$, $i=1,2,\ldots, N$, 
and $\Gamma^v$, $\Gamma_i^d$, $i=1,2,\ldots, N$, are nonsingular}.\\
This case correspond to the scenario in which the unknown reference velocity and the disturbances are constant signals. The arguments below show that the distributed control laws characterized in Proposition \ref{prop.dist.rej.+track} guarantee the achievement of the desired formation and the prescribed velocity while rejecting the disturbances. \\
From Proposition \ref{prop.dist.rej.+track} it is known that the system converges to the largest weakly invariant set for (\ref{tired}) such that $\xi=\mathbf{0}$. In the case $\Phi=\mathbf{0}$, $\Phi_i^d=\mathbf{0}_{p\times p}$, this implies that for any state $(\tilde z, \xi, \tilde \eta, \tilde \theta)=(\tilde z, \mathbf{0}, \tilde \eta, \tilde \theta)$ on this set, and for any  $w\in F(\tilde z, \xi, \tilde \eta, \tilde \theta)$, there exists $w^{\tilde z}\in {\mathcal K}\;\sign\; \tilde z$, such that
\be\label{caseI}\ba{rcl} 
\dot {\tilde z} &=& (B^T\otimes I_p) (\tilde \Gamma^v \tilde \eta)\\
\mathbf{0}&=& -(B \otimes I_p) w^{\tilde z} -\Gamma^d \tilde \theta\\
\dot{\tilde \eta} &=& -(\tilde \Gamma^v)^T (B \otimes I_p) w^{\tilde z}\\
\dot{\tilde \theta} &=& \mathbf{0}.\\ 
\ea\ee
From $\dot{\tilde \theta} = \mathbf{0}$, we conclude that $\tilde \theta$ is a constant vector. Hence, considering the second equality in (\ref{caseI}), we conclude that $ (B\otimes I_p) w^{\tilde z}$ is a constant vector. 
From the third equation we deduce that each component of $\dot {\tilde \eta}(t)$ is identically zero. As a matter of fact, the first component is zero by construction. The vectors $\dot{ \tilde \eta}_i(t)$, $i=2,\ldots,N$, are constant. If $\dot{ \tilde \eta}_i(t)$ is a non-zero vector for some $i$, then at least one of its components, say $\dot {\tilde \eta}_{i\ell}(t)$, must be non-zero and then $\tilde \eta_{i\ell}(t)$ would grow with constant velocity and this would contradict the boundedness of the solutions. Hence,  $\dot {\tilde \eta}(t)=\mathbf{0}$ and ${\tilde \eta}(t)$ is a  constant vector. From the first equation, the latter implies that  $\dot {\tilde z}(t)$ is constant as well. The same argument used for $\dot {\tilde \eta}(t)$ can be used again to show that $\dot {\tilde z}(t)=\mathbf{0}$. Now, consider the first equality in (\ref{caseI}) and multiply both sides by $B\otimes I_p$. Since $\dot {\tilde z} = \mathbf{0}$, we obtain $\mathbf{0} = (BB^T\otimes I_p) (\tilde \Gamma^v \tilde \eta)$. From the property of the Laplacian of a connected undirected graph, we conclude $\tilde \Gamma^v \tilde \eta = {\mathcal R}({\mathbf{1}_N\otimes I_p})$. 
From the definition of $\tilde \Gamma^v\tilde \eta$, we have $\tilde \Gamma^v\tilde \eta=\mathbf{0}$ and hence $\tilde \eta=\mathbf{0}$. 
The third identity in (\ref{caseI}), $\dot{\tilde\eta}=\mathbf{0}$, and the structure of $\tilde{\Gamma}^v$ implies that 
\[
(B\otimes I_p) w^{{\tilde z}}=
\left(\ba{c}
B_1^T\otimes I_p\\
B_2^T\otimes I_p
\ea
\right)w^{{\tilde z}}= \left(\ba{c}
c\\
\mathbf{0}
\ea
\right)
\]
for some constant $c\in \R^p$ (recall that it was proven previously that $(B\otimes I_p) w^{{\tilde z}}$ is a constant vector), and where $B_1^T\in \R^{1\times M}, B_2^T\in \R^{(N-1)\times M}$ is a partition of the incidence matrix $B$. We now prove that $c=\mathbf{0}$. Multiplying on the left by $\mathbf{1}_N^T\otimes I_p$ the previous expression, one obtains
\[
\mathbf{0}
= (\mathbf{1}_N^T\otimes I_p)(B\otimes I_p) w^{{\tilde z}}=
(\mathbf{1}_N^T\otimes I_p)\left(\ba{c}
c\\
\mathbf{0}
\ea
\right)=c
\]
as claimed. As a consequence, $(B\otimes I_p) w^{{\tilde z}}=\mathbf{0}$. With the same argument used in the proof of Proposition \ref{p1}, this implies that ${\tilde z}=\mathbf{0}$. Moreover, from the second equation in  (\ref{caseI}), $\Gamma^d \tilde \theta=\mathbf{0}$, and this shows that $\tilde \theta=\mathbf{0}$.

The arguments above allow us to infer the following:

\begin{corollary}
Assume that $\Phi=\mathbf{0}$ and $\Phi^d_i=\mathbf{0}$, for $i=1,2,\ldots, N$. Also assume that $\Gamma^v$, $\Gamma_i^d$, $i=2,3,\ldots, N$,  are non-singular and the graph $G$ is a connected undirected graph . If 
\be\label{choice}\ba{rcllrcl}
 \tilde G &=& (\tilde \Gamma^v)^T\\
G^d &=& (\Gamma^d)^T, 
\ea\ee
then all the Krasowskii solutions to the  system (\ref{ocls.d}) in closed-loop with the control input 
\be\label{control}
\hat u=\tilde u = -(B\otimes I_p)\sign\,{\tilde z},\quad \check u = h(\xi)
\ee
are bounded and converge to the 
origin.
\end{corollary}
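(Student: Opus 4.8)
The plan is to derive the Corollary as a specialization of Proposition~\ref{prop.dist.rej.+track}. That proposition already delivers boundedness of every Krasowskii solution and convergence to the largest weakly invariant set of~(\ref{tired}) on which $\xi=\mathbf{0}$; what remains is to show that, once the standing hypotheses $\Phi=\mathbf{0}$ and $\Phi^d_i=\mathbf{0}$ are imposed, this invariant set collapses to the single point $\tilde z=\mathbf{0}$, $\tilde\eta=\mathbf{0}$, $\tilde\theta=\mathbf{0}$ (with $\xi=\mathbf{0}$). First I would substitute $\Phi=\mathbf{0}$, $\Phi^d_i=\mathbf{0}$ into~(\ref{tired}) to obtain the reduced dynamics~(\ref{caseI}): along any solution in the invariant set there is a selection $w^{\tilde z}\in\mathcal{K}\sign\,\tilde z$ for which $\dot{\tilde\theta}=\mathbf{0}$, $\mathbf{0}=-(B\otimes I_p)w^{\tilde z}-\Gamma^d\tilde\theta$, $\dot{\tilde\eta}=-(\tilde\Gamma^v)^T(B\otimes I_p)w^{\tilde z}$, and $\dot{\tilde z}=(B^T\otimes I_p)\tilde\Gamma^v\tilde\eta$.

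The core of the argument is a bootstrapping ladder that propagates ``constant, hence (by boundedness) zero'' upward through these four relations. From $\dot{\tilde\theta}=\mathbf{0}$ the vector $\tilde\theta$ is constant along the invariant solution, so the second relation forces $(B\otimes I_p)w^{\tilde z}$ to be constant; the third relation then makes $\dot{\tilde\eta}$ constant. Here I would use boundedness decisively: a nonzero constant $\dot{\tilde\eta}$ would drive some component of $\tilde\eta$ to grow linearly in $t$, contradicting boundedness, so $\dot{\tilde\eta}=\mathbf{0}$ and $\tilde\eta$ is constant. The identical reasoning applied to the first relation gives $\dot{\tilde z}$ constant and then $\dot{\tilde z}=\mathbf{0}$.

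Next I would bring in the graph structure. Left-multiplying $\dot{\tilde z}=(B^T\otimes I_p)\tilde\Gamma^v\tilde\eta$ by $B\otimes I_p$ and using $\dot{\tilde z}=\mathbf{0}$ yields $(BB^T\otimes I_p)\tilde\Gamma^v\tilde\eta=\mathbf{0}$; since $BB^T$ is the Laplacian of a connected graph, its kernel is $\mathcal{R}(\mathbf{1}_N)$, so $\tilde\Gamma^v\tilde\eta\in\mathcal{R}(\mathbf{1}_N\otimes I_p)$. The block-diagonal form of $\tilde\Gamma^v$ with a zero leading block (coming from the fictitious $\tilde\eta_1=\mathbf{0}$) forces all blocks of $\tilde\Gamma^v\tilde\eta$ to coincide, hence to vanish, and nonsingularity of $\Gamma^v$ then gives $\tilde\eta=\mathbf{0}$. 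Feeding $\dot{\tilde\eta}=\mathbf{0}$ back into the third relation and partitioning the rows of $B$ into $B_1^T$ (one row) and $B_2^T$ (the remaining rows) shows $(B\otimes I_p)w^{\tilde z}$ has the form $(c,\mathbf{0})$ for a constant $c\in\R^p$; left-multiplying by $\mathbf{1}_N^T\otimes I_p$ and using $\mathbf{1}_N^TB=\mathbf{0}$ gives $c=\mathbf{0}$, so $(B\otimes I_p)w^{\tilde z}=\mathbf{0}$. At this point I would invoke verbatim the closing step of the proof of Proposition~\ref{p1}: since $\tilde z\in\mathcal{R}(B^T\otimes I_p)$ and $w^{\tilde z}\in\mathcal{N}(B\otimes I_p)$ one has $\langle\tilde z,w^{\tilde z}\rangle=0$, while any nonzero component $\tilde z_{k\ell}$ would force $\tilde z_{k\ell}w^{\tilde z}_{k\ell}>0$; hence $\tilde z=\mathbf{0}$. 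Finally the second relation collapses to $\Gamma^d\tilde\theta=\mathbf{0}$, and nonsingularity of each $\Gamma^d_i$ yields $\tilde\theta=\mathbf{0}$, completing the characterization of the invariant set.

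The step I expect to be most delicate is the repeated passage from ``constant derivative'' to ``zero derivative.'' This is legitimate only because one reasons along a complete solution that remains in the weakly invariant set for all $t$, and because Proposition~\ref{prop.dist.rej.+track}, through properness of $V$, guarantees that \emph{every} solution (in particular any such invariant one) is bounded; I would state this explicitly so that the linear-growth contradictions are justified. A secondary point requiring care is the bookkeeping of the zero leading block of $\tilde\Gamma^v$ together with the fact that $w^{\tilde z}$ is a \emph{fixed} selection rather than a time-varying set, since it is precisely this that licenses the ``all blocks equal, leading block zero'' deduction and the partition argument that kills $c$.
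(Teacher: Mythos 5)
Your proposal is correct and follows essentially the same route as the paper's own argument for Case~I: specializing Proposition~\ref{prop.dist.rej.+track} to (\ref{caseI}), the constant-derivative/boundedness bootstrap giving $\dot{\tilde\eta}=\dot{\tilde z}=\mathbf{0}$, the Laplacian-kernel and zero-leading-block argument yielding $\tilde\eta=\mathbf{0}$, the partition of $B$ and left-multiplication by $\mathbf{1}_N^T\otimes I_p$ to kill $c$, the Proposition~\ref{p1} closing step for $\tilde z=\mathbf{0}$, and nonsingularity of $\Gamma^d$ for $\tilde\theta=\mathbf{0}$. Your added remarks on where boundedness is used (and that only the constancy of $(B\otimes I_p)w^{\tilde z}$, not of $w^{\tilde z}$ itself, is actually needed) are sound refinements of the same proof.
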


{\em Case II: Harmonic disturbance rejection with known reference velocity.}\\
We consider now the case in which the reference velocity is known and the controller only adopts an internal model to reject the disturbances. In this case, the system (\ref{ocls.d}) becomes
\be\label{ocls.d.IV}\ba{rcl}
\dot {\tilde z} &=& (B^T\otimes I_p) h(\xi)\\
\dot{\xi} &=& f(\xi) + g(\xi) (\hat u - \Gamma^d \tilde \theta)\\
\dot{\tilde \theta} &=& \Phi^d \tilde \theta +G^d \check u.\\ 
\ea\ee
The choice $G^d =(\Gamma^d)^T$, $\hat u = -(B\otimes I_p) \sign{\tilde z}$, $\check u = h(\xi)$ as designed in Proposition \ref{prop.dist.rej.+track} yields that all the 
Krasowskii solutions to the closed-loop system 
\be\label{ocls.d.IV.cl}\ba{rcl}
\dot {\tilde z} &=& (B^T\otimes I_p) h(\xi)\\
\dot{\xi} &=& f(\xi) + g(\xi) (-(B\otimes I_p) \sign{\tilde z}- \Gamma^d \tilde \theta)\\
\dot{\tilde \theta} &=& \Phi^d \tilde \theta +(\Gamma^d)^T h(\xi)\\ 
\ea\ee
are bounded and converge to the largest weakly invariant set for (\ref{ocls.d.IV.cl}). In the case 
the graph ${\mathcal G}$ has no loops the following holds:

\begin{corollary}
If the graph ${\mathcal G}$ has no loops and for $i=1,2\ldots, N$ the exosystems have matrices $(\Gamma_i^d, \Phi^d_i)$ of the form
\[
\Phi^d_{i}={\rm block.diag} \left\{
\left(\ba{cc} 
0 & \omega_{i1}\\
-\omega_{i1}& 0\\
 \ea \right),\ldots, \left(\ba{cc} 
0 & \omega_{ip}\\
-\omega_{ip}& 0\\
 \ea \right)
\right\},
\]
with $\omega_{i\ell}\ne 0$ for all $\ell=1,\ldots,p$, and 
\[
\Gamma_i^d= 
{\rm block.diag} \left\{
\Gamma_{i1}^d, \ldots, \Gamma_{ip}^d
\right\}
\]
with $\Gamma_{i\ell}^d\ne \mathbf{0}$, for all $\ell=1,\ldots,p$, 
then all the Krasowkii solutions to (\ref{ocls.d.IV.cl}) converge to the set of points where $\xi=\mathbf{0}$ and $\tilde z=\mathbf{0}$.  
\end{corollary}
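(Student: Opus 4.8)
The plan is to combine the convergence statement already established in Proposition~\ref{prop.dist.rej.+track} with a sharper characterization of the largest weakly invariant set, exploiting the purely oscillatory (zero time-average) nature of the internal-model states together with the acyclicity of $\mathcal G$. First I would note that (\ref{ocls.d.IV.cl}) is the specialization of (\ref{ocls.d}) in which the $\tilde\eta$ dynamics are absent, so the argument of Proposition~\ref{prop.dist.rej.+track} applies verbatim with $V=||\tilde z||_1+S(\xi)+\tfrac12\tilde\theta^T\tilde\theta$, giving $\dot{\overline V}\le -W(\xi)$, boundedness of solutions, and convergence to the largest weakly invariant set contained in $\{\xi=\mathbf 0\}$. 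It then suffices to prove that any such invariant set lies in $\{\tilde z=\mathbf 0\}$.

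Next I would characterize that set. Along an invariant solution $\xi(t)\equiv\mathbf 0$, hence $f(\xi)=h(\xi)=\mathbf 0$, and (\ref{ocls.d.IV.cl}) forces $\dot{\tilde z}=\mathbf 0$ (so $\tilde z$ is constant), $\dot{\tilde\theta}=\Phi^d\tilde\theta$, and — since $g(\mathbf 0)$ is full column rank, as in the final part of the proof of Proposition~\ref{p1} — the existence of $w^{\tilde z}\in\mathcal K\,\sign\,\tilde z$ with
\[
(B\otimes I_p)\,w^{\tilde z}(t)+\Gamma^d\tilde\theta(t)=\mathbf 0\qquad\text{for a.e. }t.
\]
Because $\Phi^d$ is skew-symmetric with the $2\times 2$ rotation generators of the statement, $\tilde\theta(t)=e^{\Phi^d t}\tilde\theta(0)$ is a bounded oscillation, and by the block-diagonal structure of $\Gamma^d$ each scalar signal $d^\theta_{i\ell}(t):=\Gamma^d_{i\ell}\tilde\theta_{i\ell}(t)$ is a sinusoid of frequency $\omega_{i\ell}\ne 0$, hence of zero time-average.

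The key step, which I expect to be the main obstacle, is converting the above constraint into a statement about $\tilde z$. Here the loop-free hypothesis is indispensable: a connected acyclic graph has $M=N-1$ and its incidence matrix $B$ has full column rank, so $B\otimes I_p$ is injective. Reading the constraint one spatial coordinate $\ell$ at a time yields $B\,w^{\tilde z}_{\cdot\ell}(t)=-d^\theta_{\cdot\ell}(t)$, and injectivity forces the unique preimage $w^{\tilde z}_{\cdot\ell}(t)=-(B^TB)^{-1}B^T d^\theta_{\cdot\ell}(t)$, which is a linear time-invariant image of a zero-mean trigonometric signal and therefore itself has zero time-average. On the other hand, any component with $\tilde z_{k\ell}\ne 0$ satisfies $w^{\tilde z}_{k\ell}(t)\equiv\sign\,\tilde z_{k\ell}\in\{-1,+1\}$, whose average is $\pm1\ne 0$. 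This contradiction shows that no component of $\tilde z$ can be nonzero, i.e.\ $\tilde z=\mathbf 0$ on the invariant set; together with $\xi=\mathbf 0$ and the La~Salle conclusion, this proves the corollary.

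The delicate point to be checked is precisely that the a priori only measurable Krasowskii selection $w^{\tilde z}(t)$ coincides with the smooth zero-mean signal produced by the left inverse $(B^TB)^{-1}B^T$, which is where acyclicity enters in an essential way: a cycle would enlarge $\mathcal N(B)$ and permit an additive constant null-space term, reinstating frozen $\pm1$ patterns in $w^{\tilde z}$. This is exactly why the conclusion cannot be asserted in general and is claimed only for graphs without loops.
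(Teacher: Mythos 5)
Your proposal is correct and follows essentially the same route as the paper: both characterize the invariant set by $\dot{\tilde z}=\mathbf 0$, $\dot{\tilde\theta}=\Phi^d\tilde\theta$ and the constraint $(B\otimes I_p)w^{\tilde z}=-\Gamma^d\tilde\theta$, use acyclicity (non-singularity of the edge Laplacian $B^TB$) to solve for $w^{\tilde z}$ as a linear combination of harmonics with nonzero frequencies, and derive a contradiction with the frozen value $\pm 1$ forced by a nonzero component $\tilde z_{k\ell}$. The only difference is cosmetic: you formalize the contradiction via the plain time average $\frac{1}{T}\int_0^T(\cdot)\,dt$, whereas the paper integrates against the weight $(1-t/\lambda)$ over $[0,\lambda]$; both amount to the same mean-value argument.
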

\begin{proof}
A solution on the weakly invariant set for (\ref{ocls.d.IV.cl}), where $\xi=\mathbf{0}$,  is such that $(\tilde z, \tilde \theta)$ satisfies 
\be\label{dyn.inv.set}\ba{rcl}
\dot{\tilde z} &=& \mathbf{0}\\
\dot{\tilde\theta} &=& \Phi^d\tilde \theta
\ea\ee
and there exists $w^{\tilde z}\in  \mathcal{K}\; {\rm sign} \tilde z$ such that 
$\mathbf{0}= g(\mathbf{0})(-(B\otimes I_p) w^{\tilde z}- \Gamma^d \tilde \theta)$. 
Since by assumption $g(\mathbf{0})$ is full-column rank, the latter is equivalent to
\be\label{critical1}
\mathbf{0}= -(B\otimes I_p) w^{\tilde z}- \Gamma^d \tilde \theta. 
\ee
If the graph ${\mathcal G}$ has no loops, the edge Laplacian matrix $B^T B$ is non-singular (\cite{mesbahi.egerstedt.book}). Then, from (\ref{critical1}) we obtain
\be\label{critical2}
w^{\tilde z}= - (B^T B \otimes I_p)^{-1}\ (B^T \otimes I_p)\ \Gamma^d \tilde \theta. 
\ee
From (\ref{dyn.inv.set}), $\dot{\tilde\theta} = \Phi^d\tilde \theta$ implies that each $\Gamma^d_{i\ell}\tilde \theta_{i\ell}$ is a harmonic signal. One can write
\begin{equation*}
\Gamma^d_{i\ell}\tilde \theta_{i\ell} = \alpha_{i\ell}\cos (\omega_{i\ell}\;t + \varphi_{i\ell})
\end{equation*}
where $\alpha_{i\ell}, \varphi_{i\ell}$ are constants,
and $t \geq t_c$ ($t_c$ is the time at which the system converges to the invariant set).
 Therefore, (\ref{critical2}) implies that each $w^{{\tilde z}}_{k\ell}$ is a linear combination of harmonic signals. One obtains
\be\label{critical3}
w^{{\tilde z}}_{k\ell} = -\sum_{i,\ell} \beta_{i\ell}\cos (\omega_{i\ell}\;t + \varphi_{i\ell})
\ee
for some constants $\beta_{i\ell}$ (where the dependence on $k$ was neglected). We claim that (\ref{critical3}) implies that ${\tilde z}_{k\ell}$ is equal to zero. By contradiction, assume that ${\tilde z}_{k\ell}$ is not zero. Since $\dot{\tilde z}_{k\ell}=0$ (from (\ref{dyn.inv.set})), then ${\tilde z}_{k\ell} = c$, where $c \in \R$ is a non-zero constant. As a result, $w^{{\tilde z}}_{k\ell}$ is either $-1$ or $+1$. Without loss of generality, let us assume $w^{{\tilde z}}_{k\ell} = 1$. Now, multiply (\ref{critical3}) by $(1-\frac{t}{\lambda})$, for some positive $\lambda$, and integrate 
both sides of the equality. For the left-hand side when $w^{{\tilde z}}_{k\ell} = 1$ one obtains
\be\label{ce2}
\int_{0}^{\lambda}\ (1-\frac{t}{\lambda}) dt= 
\frac{\lambda}{2}.
\ee
On the other hand, the right-hand side of (\ref{critical3}) yields
\begin{equation}\label{ce3}
\begin{aligned}
\int_{0}^{\lambda}\ (1-\frac{t}{\lambda})\ \sum_{i, \ell} \beta_{i\ell}\cos(\omega_{i\ell}\;t + \varphi_{i\ell}) dt=\\ 
\sum_{i, \ell} 
\int_{0}^{\lambda} \beta_{i\ell}\ (1-\frac{t}{\lambda})\cos(\omega_{i\ell}\;t + \varphi_{i\ell}) 
dt=\\
- \sum_{i, \ell} 
\beta_{i\ell}\ (\frac{\sin(\varphi_{i\ell})}{\omega_{i\ell}} - \frac{\cos(\lambda \omega_{i\ell} + \varphi_{i\ell})}{\lambda \omega_{i\ell}^2} + 
\frac{cos(\varphi_{i\ell})}{\lambda {\omega_{i\ell}}^2}) 
\end{aligned}
\end{equation}
Since $\omega_{i\ell} \neq 0$, $N, p$ are finite numbers, the above sum has a bounded value. As $\lambda$ can be any positive number, from (\ref{ce2}) and (\ref{ce3}), it can be concluded that the equality (\ref{critical3}) under the assumption ${\tilde z}_{k\ell} = c$ cannot hold. Therefore, 
${\tilde z}_{k\ell} = 0$. Since the same holds true for any $k,\ell$, then $\tilde z= \mathbf{0}$.
\end{proof}

\begin{remark}
The condition on the graph to have no cycles is not necessary and one can find alternative statements that do not require the graph to be a tree but introduce conditions on the frequencies $\omega_{i\ell}$. 
\end{remark}

\section{A different controller design for velocity tracking and disturbance rejection}\label{sec5}
In the previous section, the distributed controllers that allow to keep the formation while tracking an unknown reference trajectory and rejecting disturbances apply only to some special cases. To overcome this limitation, we propose in this section a slightly different controller design. The alternative design is carried out for a more restricted class of systems. As a matter of fact, we assume that the dynamics (\ref{eq:a5}) of the agents satisfies
\[
g_i(\xi_i)= g_i,\quad y_i=h(\xi_i)=\xi_i, 
\]
that is the input vector field is constant (and has full-column rank) and the passive output $y_i$ coincides with the state $\xi_i$. Finally, we ask the function $W_i(\xi_i)$ in (\ref{eq:passivity})  to be lower bounded by a quadratic term, namely   $W_i(\xi_i)\ge ||\xi_i||^2 $. 

The main difference in the design lies in the controller (\ref{im.d}) introduced to counteract the effect of the disturbances. Although we still adopt the same structure, namely 
\be\label{im.d2} 
\ba{rcll}
\dot{\theta}_i &=& \Phi^d_i \theta_i +G^d_i \check{u}_i\\
\hat d_i &=& \Gamma^d_i \theta_i, & i= 1,2,\ldots, N,
\ea\ee
we give $G^d_i \check{u}_i$ a different interpretation, namely we let 
\[
\check{u}_i = y_i-\hat \xi_i,
\]
where $\hat \xi_i$ is an additional state of the controller that obeys the equation
\[\ba{rcl}
\dot {\hat \xi}_i &=& f_{i} (y_{i} ) + g_{i} (u_{i}+\Gamma_i^d  \theta_i) + H_{i}(\xi_i-\hat\xi_i),
\ea\]
and $G^d_{i},H_{i}$ are two gain matrices. The variables
\[
\tilde \xi_i =\xi_i-\hat \xi_i, \quad  \tilde\theta_i =\theta_i-w^d_i
\]
obey the equations
\be\label{est.err.var}\ba{rcl}
\dot{{\tilde \xi}}_i &=& - g_i \Gamma_i^d \tilde \theta_i -H_{i}\tilde \xi_i \\
\dot{{\tilde \theta}}_i &=& \Phi_i^d \tilde \theta_i  + G_{i}^d\tilde \xi_i \\
\ea\ee
The pair
\[
\left(\ba{cc}
I & \mathbf{0}
\ea\right), \quad 
\left(\ba{cc}
 \mathbf{0} & - g_i \Gamma_i^d \\
 \mathbf{0} & \Phi_i^d 
\ea\right)
\]
is observable provided that $g_i$ is full-column rank and the pair $(\Gamma_i^d, \Phi_i^d)$ is observable. Under these conditions, there exist matrix gains $G_{i}^d, H_{i}$  such that 
the estimation error system (\ref{est.err.var}) converges exponentially to zero. Let $P_i=P_i^T>0$ be a matrix such that 
\[
\left(\ba{cc}
-H_i & -g_i \Gamma_i^d \\
G_i^d  & \Phi_i^d 
\ea\right)^T P_i +P_i 
\left(\ba{cc}
-H_i & -g_i \Gamma_i^d \\
G_i^d  & \Phi_i^d 
\ea\right)=
-2
\left(\ba{cc}
I &  \mathbf{0} \\
 \mathbf{0} & \gamma I
\ea\right)
\]
where $\gamma$ is a constant satisfying $\gamma\ge \max\{||{\Gamma_i^d}^T{\Gamma_i^d}||,\, 1\le i\le N \}$.\\
Consider the overall system
\be\label{ocls.d2}\ba{rcl} 
\dot {\tilde z} &=& (B^T\otimes I_p) (\xi+\tilde  \Gamma^v \tilde \eta)\\
\dot \xi&=& f(\xi) + g(\hat u - \Gamma^d \tilde \theta)\\
\dot{\tilde \eta} &=& \tilde \Phi \tilde \eta +\tilde G \tilde u\\
\dot{{\tilde \theta}} &=& \Phi^d \tilde \theta +G^d\tilde \xi \\
\dot{{\tilde \xi}} &=& -g \Gamma^d \tilde \theta -H\tilde \xi \\
\ea\ee
where $u+\Gamma^d w^d=\hat u-\Gamma^d \theta+\Gamma^d w^d=  \hat u - \Gamma^d \tilde \theta$, $H={\rm block.diag}\{H_1,H_2, \ldots, H_N\}$, and design the control inputs as
\[
\hat u= \tilde u =-(B\otimes I_p) {\rm sign}{\tilde z},
\] 
so that the closed-loop system becomes
\be\label{ocls.d2.cl}\ba{rcl} 
\dot {\tilde z} &=&  (B^T\otimes I_p) (\xi+\tilde  \Gamma^v \tilde \eta)\\
\dot \xi&=& f(\xi) + g(-(B\otimes I_p) {\rm sign}{\tilde z} - \Gamma^d \tilde \theta)\\
\dot{\tilde \eta} &=& \tilde \Phi \tilde \eta -\tilde G (B\otimes I_p) {\rm sign}{\tilde z}\\
\dot{{\tilde \theta}} &=& \Phi^d \tilde \theta  +G^d\tilde \xi \\
\dot{{\tilde \xi}} &=& -g \Gamma^d \tilde \theta -H\tilde \xi \\
\ea\ee

\begin{proposition}
Assume that the pairs $(\Gamma^v, \Phi)$ and $(\Gamma^d_i, \Phi_i^d)$, for $i=1,2,\ldots,N$, are observable. Let $G=(\Gamma^v)^T$. Then,
all the Krasowskii solutions to the closed-loop system (\ref{ocls.d2.cl}) converge to the origin. 
\end{proposition}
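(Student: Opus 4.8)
The plan is to prove convergence to the origin with a single composite nonsmooth Lyapunov function and the nonsmooth LaSalle principle of \cite{bacciotti.ceragioli.esaim99}, exactly in the spirit of Propositions \ref{p1} and \ref{p2}. The structural observation driving the design is that in (\ref{ocls.d2.cl}) the estimation-error subsystem $(\tilde\xi,\tilde\theta)$ is autonomous and, thanks to the choice of $G_i^d,H_i$ and the Lyapunov equation defining $P_i=P_i^T>0$, exponentially stable; it enters the remaining $(\tilde z,\xi,\tilde\eta)$ dynamics only through the term $-g\Gamma^d\tilde\theta$ in $\dot\xi$. Accordingly I would augment the Lyapunov function of Proposition \ref{p2} with the quadratic storage of the estimation error, namely
\[
V=\|\tilde z\|_1+S(\xi)+\tfrac12\,\tilde\eta^T\tilde\eta+\tfrac12\sum_{i=1}^N\begin{pmatrix}\tilde\xi_i\\ \tilde\theta_i\end{pmatrix}^T P_i\begin{pmatrix}\tilde\xi_i\\ \tilde\theta_i\end{pmatrix},
\]
which is locally Lipschitz, positive definite, and radially unbounded since each $P_i>0$ and $S$ is radially unbounded.

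Next I would evaluate the set-valued derivative $\dot{\overline V}$ along (\ref{ocls.d2.cl}). The sign nonlinearity appears only in the $(\tilde z,\xi,\tilde\eta)$ block, so the quadratic sum contributes the ordinary derivative of a smooth function. For a selection $w^{\tilde z}\in\mathcal K\sign\tilde z$ and the matching $p\in\partial V$ with $p^{\tilde z}=w^{\tilde z}$, the first three terms are handled as in Proposition \ref{p2}: using $h(\xi)=\xi$, skew-symmetry of $\Phi$ (inherited from the velocity construction of Section \ref{sec.urv}, hence $\tilde\eta^T\tilde\Phi\tilde\eta=0$), the choice $G=(\Gamma^v)^T$ (hence $\tilde G=(\tilde\Gamma^v)^T$), and strict passivity with input $u=-(B\otimes I_p)w^{\tilde z}-\Gamma^d\tilde\theta$, all terms containing $(B\otimes I_p)w^{\tilde z}$ cancel and one is left with $-W(\xi)-\xi^T\Gamma^d\tilde\theta$. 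For the quadratic part, the Lyapunov equation satisfied by $P_i$ gives
\[
\tfrac{d}{dt}\,\tfrac12\sum_{i=1}^N\begin{pmatrix}\tilde\xi_i\\ \tilde\theta_i\end{pmatrix}^T P_i\begin{pmatrix}\tilde\xi_i\\ \tilde\theta_i\end{pmatrix}=-\|\tilde\xi\|^2-\gamma\|\tilde\theta\|^2,
\]
so every element of $\dot{\overline V}$ is bounded above by $-W(\xi)-\xi^T\Gamma^d\tilde\theta-\|\tilde\xi\|^2-\gamma\|\tilde\theta\|^2$.

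The decisive step, and the one where the extra standing assumptions of this section are essential, is to dominate the indefinite cross term $-\xi^T\Gamma^d\tilde\theta$. Invoking $W(\xi)\ge\|\xi\|^2$, Young's inequality, and the block-diagonal bound $\|\Gamma^d\tilde\theta\|^2=\sum_i\tilde\theta_i^T(\Gamma_i^d)^T\Gamma_i^d\tilde\theta_i\le\gamma\|\tilde\theta\|^2$ guaranteed by the choice of $\gamma$, one gets $-\xi^T\Gamma^d\tilde\theta\le\tfrac12\|\xi\|^2+\tfrac\gamma2\|\tilde\theta\|^2$, whence $\dot{\overline V}\le-\tfrac12\|\xi\|^2-\|\tilde\xi\|^2-\tfrac\gamma2\|\tilde\theta\|^2\le0$. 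Since $V$ is proper, solutions are bounded and extend to all $t\ge0$, and the nonsmooth LaSalle principle shows that every Krasowskii solution converges to the largest weakly invariant set contained in $\{\xi=\mathbf0,\ \tilde\xi=\mathbf0,\ \tilde\theta=\mathbf0\}$.

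It then remains to collapse this invariant set to the origin, reusing the earlier arguments. On the set, $\dot\xi=\mathbf0$ together with $f(\mathbf0)=\mathbf0$, $\tilde\theta=\mathbf0$, and full column rank of $g$ forces $(B\otimes I_p)w^{\tilde z}=\mathbf0$ for some $w^{\tilde z}\in\mathcal K\sign\tilde z$; combined with $\tilde z\in\mathcal R(B^T\otimes I_p)$, the contradiction argument ending the proof of Proposition \ref{p1} yields $\tilde z=\mathbf0$. Substituting $\tilde z=\mathbf0$ and $(B\otimes I_p)w^{\tilde z}=\mathbf0$ back into (\ref{ocls.d2.cl}) leaves $\dot{\tilde\eta}=\tilde\Phi\tilde\eta$ and $\mathbf0=\dot{\tilde z}=(B^T\otimes I_p)\tilde\Gamma^v\tilde\eta$; since $\mathcal N(B^T)=\mathcal R(\mathbf1_N)$, the block-diagonal structure of $\tilde\Gamma^v$ with $\tilde\eta_1=\mathbf0$, and observability of $(\Gamma^v,\Phi)$ force $\tilde\eta=\mathbf0$ exactly as in Proposition \ref{p2}. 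Hence the invariant set is the origin and all solutions converge to it. I expect the cross-term domination in the composite Lyapunov estimate to be the main obstacle: it is the sole place where this proof departs from Proposition \ref{p2}, and it is precisely what dictates the quadratic lower bound on $W_i$, the observer-based reinterpretation of the disturbance internal model, and the specific gain and $\gamma$ design of this section.
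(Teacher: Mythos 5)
Your proposal is correct and follows essentially the same route as the paper's own proof: the same composite Lyapunov function $\|\tilde z\|_1+S(\xi)+\tfrac12\tilde\eta^T\tilde\eta$ augmented with the $P_i$-weighted quadratic form of the estimation error, the same dissipation bound $-W(\xi)-\xi^T\Gamma^d\tilde\theta-\|\tilde\xi\|^2-\gamma\|\tilde\theta\|^2$ with the cross term absorbed by completion of squares using $W(\xi)\ge\|\xi\|^2$ and the definition of $\gamma$, LaSalle convergence to the set $\{\xi=\mathbf 0,\,\tilde\theta=\mathbf 0,\,\tilde\xi=\mathbf 0\}$, and the same collapse of the invariant set to the origin via $(B\otimes I_p)w^{\tilde z}=\mathbf 0\Rightarrow\tilde z=\mathbf 0$ and observability of $(\Gamma^v,\Phi)\Rightarrow\tilde\eta=\mathbf 0$. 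Your explicit invocation of the skew-symmetry of $\Phi$ (a standing assumption inherited from Section \ref{sec.urv} that the proposition's statement leaves implicit) is, if anything, slightly more careful than the paper's wording.
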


\begin{proof}
The analysis follows a similar trail as for the other proofs, with a few significant variations. We focus on the set-valued derivative 
\begin{equation*}
\dot{\overline V}(\tilde z,\xi, \tilde \eta, \tilde \theta, \tilde \xi)= \{a\in \R\,:\, \exists w\in F(\tilde z,\xi, \tilde \eta, \tilde \theta, \tilde \xi)\ \ {\rm s.t.}\ \ 
a=\langle w, p \rangle,\  \mbox{for all} \ \ p\in \partial V (\tilde z,\xi, \tilde \eta, \tilde \theta, \tilde \xi)\},
\end{equation*}
In this case the Lyapunov function is 
\[\ba{rl}
V(\tilde z,\xi, \tilde \eta, \tilde \theta, \tilde \xi) = 
||\tilde z||_1+ S(\xi)+ \frac{1}{2} \tilde \eta^T \tilde \eta+ 
 \dst\frac{1}{2}
\dst\sum_{i=1}^N (\tilde \theta_i^T\; \tilde \xi_i^T) P_i (\tilde \theta_i^T\; \tilde \xi_i^T)^T,
\ea\]
where $P={\rm block.diag}\{P_1,\ldots, P_N\}$ and $F$ is obtained from
the right-hand side of (\ref{ocls.d2.cl}), namely
\[
\left(\ba{c}
(B^T\otimes I_p) (\xi+\tilde  \Gamma^v \tilde \eta)\\
f(\xi) - g\Gamma^d \tilde \theta\\
\tilde \Phi \tilde \eta\\
\Phi^d \tilde \theta  +G^d\tilde \xi\\
-g \Gamma^d \tilde \theta -H\tilde \xi
\ea\right)
+
\left(\ba{c}
\mathbf{0}\\
-g(B\otimes I_p)\\
-\tilde G (B\otimes I_p)\\
\mathbf{0}\\
\mathbf{0}\\
\ea\right)
\mathcal{K}\,{\rm sign}\tilde z.
\]
Following the line of arguments already used in this paper, for each state such that  $\dot{\overline V}$ is a non-empty set, any element of  $\dot{\overline V}$ will satisfy
\[
\langle p,v\rangle \le -W(\xi)- \xi^T \Gamma^d \tilde \theta- ||\tilde \xi||^2 -\gamma ||\tilde \theta||^2.
\]
By the condition on $W(\xi)$ and the definition of $\gamma$, a completion of the squares argument implies that $\langle p,v\rangle \le -\frac{|| \xi ||^2}{2}-\gamma \frac{|| \tilde\theta ||^2}{2}- ||\tilde \xi||^2$. As a result, for all the states for which $\dot{\overline V}\ne \emptyset$, we have $\dot{\overline V}=\{a\in \R: a\le -\frac{|| \xi ||^2}{2}-\gamma \frac{|| \tilde\theta ||^2}{2}- ||\tilde \xi||^2\}\subseteq (-\infty, 0]$. This implies convergence to the largest weakly invariant set where $\xi=\mathbf{0}$, $\tilde\theta=\mathbf{0}$, $\tilde \xi=\mathbf{0}$. On this invariant set, system (\ref{ocls.d2.cl}) becomes
\[\ba{rcl} 
\dot {\tilde z} &=&  (B^T\otimes I_p) \tilde  \Gamma^v \tilde \eta\\
\mathbf{0} &=& -(B\otimes I_p) {\rm sign}{\tilde z}\\
\dot{\tilde \eta} &=& \tilde \Phi \tilde \eta -\tilde G (B\otimes I_p) {\rm sign}{\tilde z}\\
\ea\]
The identity $\mathbf{0} = -(B\otimes I_p) {\rm sign}{\tilde z}$ implies $\tilde z=\mathbf{0}$ and the equations above can be further simplified as 
\[\ba{rcl} 
\mathbf{0} &=&  (B^T\otimes I_p) \tilde  \Gamma^v \tilde \eta\\
\dot{\tilde \eta} &=& \tilde \Phi \tilde \eta.
\ea\]
Hence, $\tilde  \Gamma^v \tilde \eta=\mathbf{0}$ and by observability of $(\Gamma^v, \Phi)$, it follows that $\tilde \eta=\mathbf{0}$.  
\end{proof}

The result can be rephrased as follows. For the formation of agents 
\[\ba{rcl}
\dot x_i &=& y_i + v_i^r\\
\dot \xi_i &=& f_i(\xi_i) +g_i (u_i+d_i)\\
y_i &=& \xi_i
\ea\]
with prescribed reference velocity $v^*$ generated by (\ref{exosv}) and with disturbances $d_i$ generated by (\ref{exos.d}), the distributed controller
\[
\ba{rcl}
\dot{\eta} &=& \tilde \Phi \eta -G (B\otimes I_p){\rm sign} \tilde z\\
\dot{\theta} &=& \Phi^d \theta +G^d (y-\hat \xi)\\
\dot {\hat \xi} &=& f (y) + g (-(B\otimes I_p){\rm sign} \tilde z+\Gamma^d \theta) + H(y-\hat\xi)\\
u &=& -(B\otimes I_p){\rm sign} \tilde z-\Gamma^d \theta\\
\ea
\]
(where the first component of $v^r$ is $v^*$)
guarantees boundedness of all the states, and convergence to the desired formation ($\tilde z=\mathbf{0}$) with asymptotic tracking of the reference velocity ($\xi=\mathbf{0}$, $\tilde\eta=\mathbf{0}$) as well as disturbance rejection. 

\section{Simulations}\label{sec:sim}
\begin{figure}[t]\centering
\includegraphics [width=0.4\linewidth]{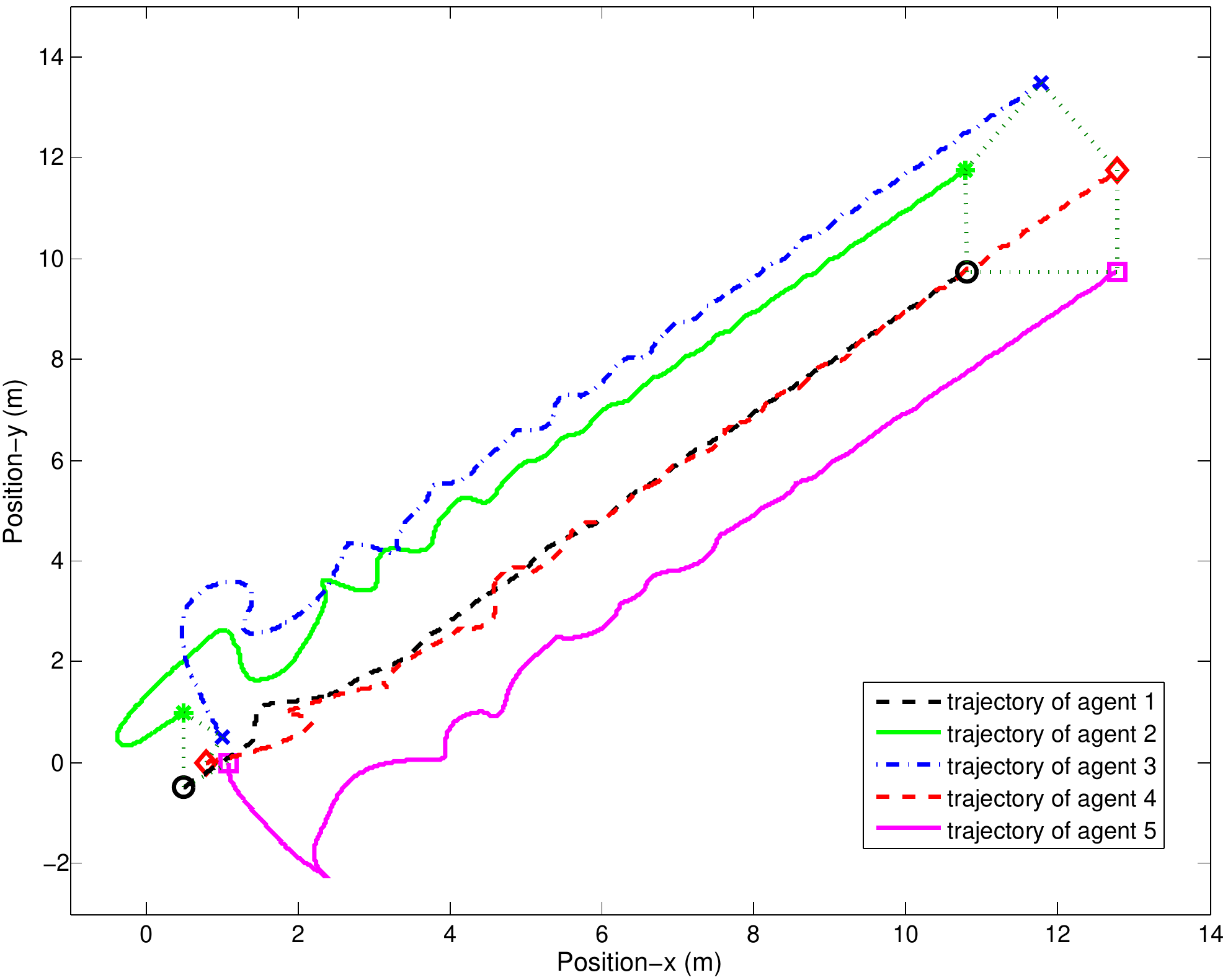}
\caption{The evolution of five agents in $\R^{2}$ tracking a constant reference velocity only known to the leader. The system reaches the desired formation and exhibits a translational motion with the desired constant velocity.}
\label{fig:f2}
\end{figure}
In this section we present the simulation results for a group of five strictly passive systems in $\R^{2}$. The dynamics of each agent is given by 
\begin{equation}
\mathcal{H}_{i}: \left\{
 \begin{array}{c}
  \dot{\xi_{i}} = - \xi_{i} + u_{i}\\
  y_{i} = \xi_{i}\hspace{12mm}\\
 \end{array}
\right.
\end{equation}
where, comparing with (\ref{eq:a5}), $f_{i} ( \xi_{i} )= -\xi_{i}$, $g_{i} ( \xi_{i} ) = I_2$, and $h_{i} ( \xi_{i} )=\xi_{i}$. The agents exchange information over a connected graph. The associated incidence matrix is
\[ B = \left( \begin{array}{ccccccc}
-1 & 0 & 0 & 0 & 0 & 0 \\
+1 & -1 & -1 & 0 & 0 & 0\\
0 & +1 & 0 & -1 & -1 & 0\\
0 & 0 & +1 & +1 & 0 & -1\\
0 & 0 & 0 & 0 & +1 & +1\end{array} \right).\]
The desired formation has a pentagonal shape with edge length equal to $2$ and is defined by the following inter-agent distance vectors: $z_{1}^{\ast} = [0\;\;2]^{T}$, $z_{2}^{\ast} = [1\;\;\sqrt{3}]^{T}$, $z_{3}^{\ast} = [2\;\;0]^{T}$, $z_{4}^{\ast} = [1\;\;\sqrt{3}]^{T}$, $z_{5}^{\ast} = [1\;\;-2-\sqrt{3}]^{T}$, $z_{6}^{\ast} = [0\;\;-2]^{T}$. Note that the number of edges of the graph is six. The initial position of the agent is set to $x(0)=[0.5\;\;-0.5\;\;0.5\;\;1\;\;1\;\;0.5\;\;0.8\;\;0\;\;1.1\;\;0]^{T}$. Figures 1 to 3 refer to the case of a formation evolving with a constant reference velocity known only to the formation leader (agent $1$). The following are set as internal model parameters for all agents: $\Phi=\mathbf{0}_{2\times2}$,\;$\Gamma ^v=I_2$,\;$G=I_2$,\;$w^v(0)=[1\ 1]^{T}$, and $\eta_{i}(0)=[0\ 0]^{T}$, for $i=2,\ldots,5$. Figure \ref{fig:f2} shows the evolution of the described system with the desired constant velocity $v^\ast = [1\ 1]^{T}$. The other agents generate the reference velocity using the control laws based on the internal model principle.

Figure \ref{fig:f3} shows the time behavior of the horizontal component of $\tilde{z}_{1}$, $\sign\,\tilde{z}_{1}$ and the corresponding control $\tilde{u}_{1}$. As time elapses, $\tilde{z}_{1}$ converges to the origin implying convergence to the desired relative position. While $\tilde{z}_{1}$ converges to the origin, $\sign\; \tilde{z}_{1}$ and $\tilde{u}_{1}$ converge to the discontinuity surface and oscillate between $+1$ and $-1$. The state variables associated with the other agents exhibit a similar behavior and are not shown. Figure \ref{fig:f5} shows the horizontal and vertical components of the reference velocity $v^\ast$ and the estimated velocities $v_i^r$. The leader generates a constant desired velocity and the follower agents estimate the same reference velocity after some time.\\

\begin{figure}[t]
\centering
\begin{minipage}{.4\textwidth}
\centering
\includegraphics [width=0.9\linewidth]{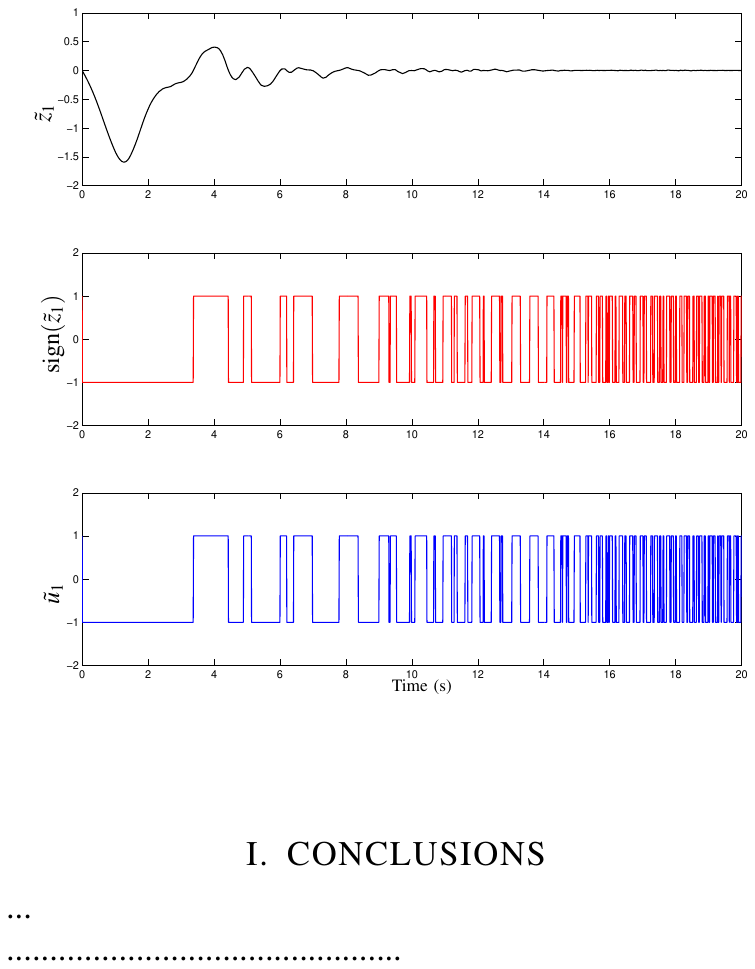}
\caption{The plot of the (horizontal) x-component of $\tilde{z}_{1}$, $\sign\; \tilde{z}_{1}$ and the corresponding control $\tilde{u}_{1}$.} 
\label{fig:f3}
\end{minipage}
\hfill
\begin{minipage}{.4\textwidth}
\centering
\includegraphics [width=0.9\linewidth]{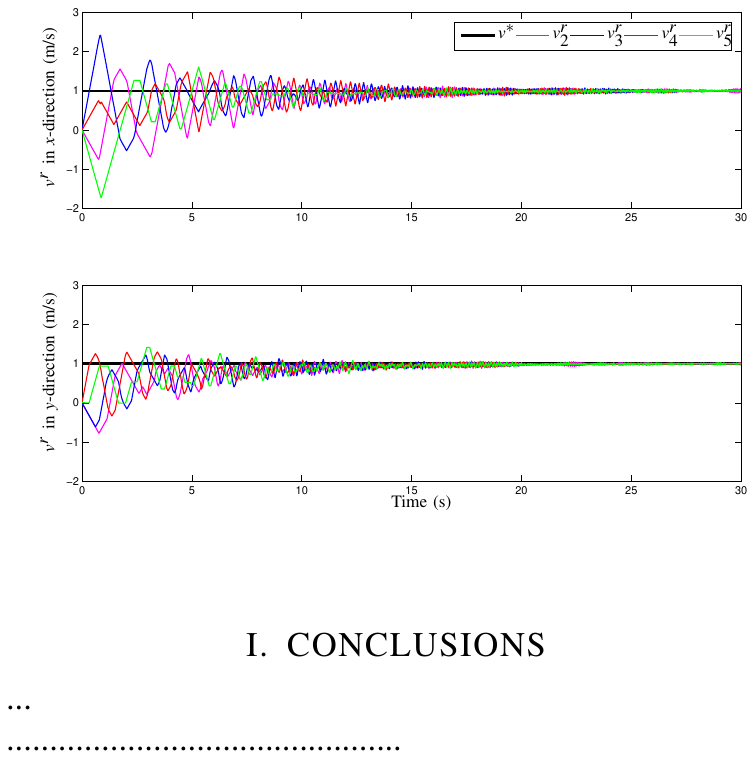}
\caption{The horizontal and vertical component of the estimated reference velocity $v_i^r$ for each agent $i$. The leader (agent 1) generates the desired velocity $v^{\ast}=[1\ 1]^{T}$(m/s).}
\label{fig:f5}
\end{minipage}
\end{figure}

Simulation results in the presence of matched disturbances are presented next. Figure \ref{fig:f7} shows the state $(\tilde z,\xi,\tilde \eta, \tilde \theta)$ of the system with constant disturbance and constant reference velocity (Case I). The reference velocity is only known to the agent one. The following are the parameters chosen for the simulation: $\Phi_i^d=\Phi=\mathbf{0}_{2\times2}$,\;$\Gamma ^v=\Gamma _i^d=I_2$,\;$G=G_i^d=I_2$,\; $w_i^d(0)=[i\ i+3]^{T}$ for $i=1 \hdots 5$ and $w^v(0)=[1\ 1]^{T}$, $\eta_{i}(0)=[0\ 0]^{T}$ for $i=2 \hdots 5$ .
As predicted, the formation is achieved, the desired velocity is reached by all the agents and the  disturbances are rejected. Figure \ref{fig:f8} shows the time behavior of the horizontal component of $\tilde{z}_{1}$, $\sign\,\tilde{z}_{1}$ and the corresponding control $\tilde{u}_{1}$. As time evolves, $\tilde{z}_{1}$ converges to the origin, and $\sign\; \tilde{z}_{1}$ and $\tilde{u}_{1}$ start switching between $+1$ and $-1$ with high frequency.
\begin{figure}[t]
\centering
\begin{minipage}{.5\textwidth}
\centering\includegraphics [width=0.9\linewidth]{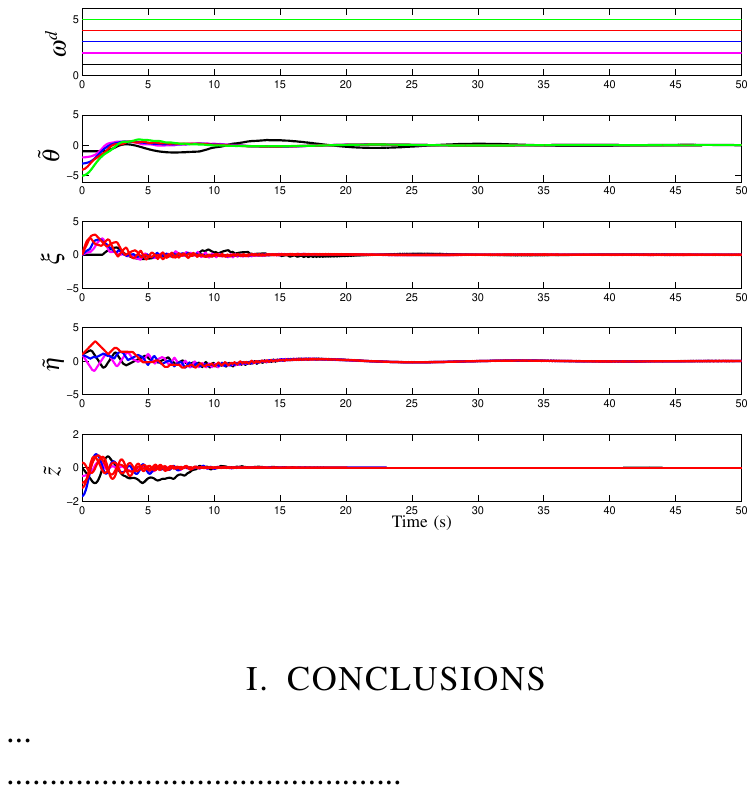}
\caption{The horizontal component of the disturbance $w^d$, the disturbance estimation error $\tilde \theta$ and the velocity error $\xi$ (for all 5 agents), together with the reference velocity estimation error $\tilde \eta$ (for follower agents) and the horizontal component of the relative position vector $\tilde z$. Here, the constant disturbance is rejected by the internal-model-based controller and $\xi$, $\tilde \eta$, $\tilde \theta$, $\tilde z$ converge to zero.}
\label{fig:f7}
\end{minipage}
\hfill
\begin{minipage}{.4\textwidth}
\centering
\includegraphics [width=0.9\linewidth]{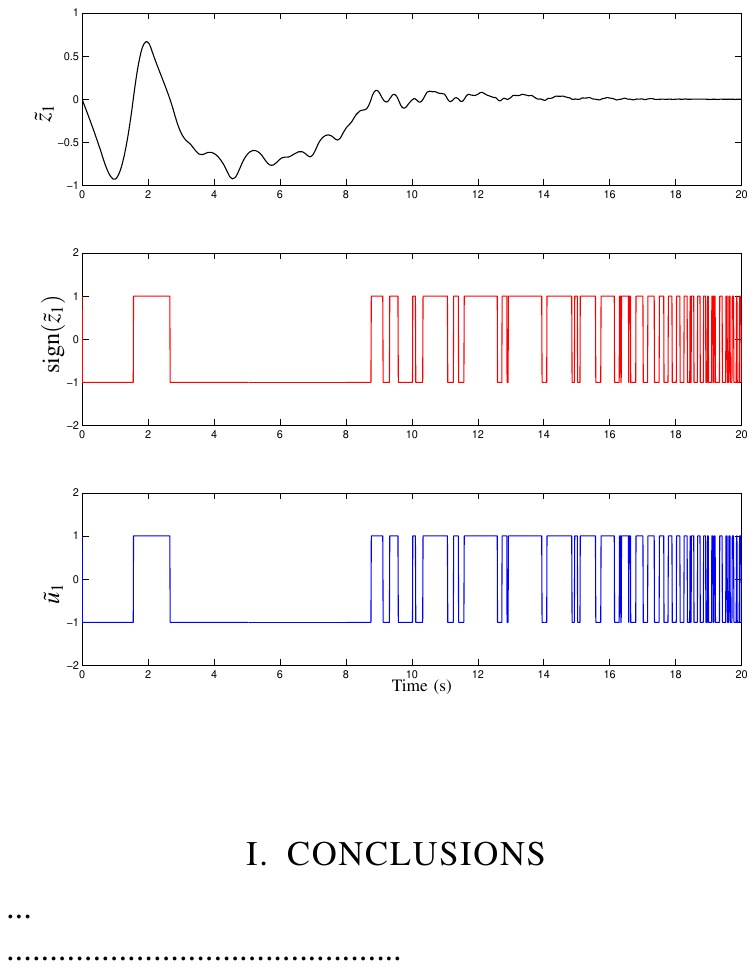}
\caption{The plot of the (horizontal) x-component of $\tilde{z}_{1}$, $\sign\; \tilde{z}_{1}$ and the corresponding control $\tilde{u}_{1}$ in the presence of constant reference velocity and constant disturbance.}
\label{fig:f8}
\end{minipage}
\end{figure}
Figures \ref{fig:f10} and \ref{fig:f11} show the state of the system with harmonic  disturbance and known reference velocity (Case II). The graph is considered to be a tree with an incidence matrix $B'$ which is obtained by removing the two last columns of the proposed matrix $B$. The desired reference velocity is $v^{\ast}=[1\ \ 1]^{T}$ and it is known to all of the agents. In this example, the following are set as the internal model parameters:\\
$ \Phi^d_1 = I_2 \otimes \left( \begin{array}{cc}
0 & 1 \\
-1 & 0 \end{array} \right)$,\; $\Gamma_i^d=\left( \begin{array}{cccc}
0.5 & 0.5 &0 & 0 \\
0 & 0 & -0.5 & 0.5 \end{array} \right)$,\;
$G_i^d={\Gamma _i^d}^{T}$,\;$w_i^d(0)=[0.1\ 0.1\ 0.1\ 0.1]^{T}$, 
$ \Phi^d_2 = 2\Phi^d_1$,
 $\Phi^d_1=\Phi^d_3=\Phi^d_5$, $\Phi^d_4= 0.5 \Phi^d_1.$
Figure \ref{fig:f10} shows the results when the disturbance is tackled by a controller based on the internal model principle. The result confirms that both the desired formation and the desired velocity are achieved. Figure \ref{fig:f11} shows the time behavior of the horizontal component of $\tilde{z}_{1}$, $\sign\,\tilde{z}_{1}$ and the corresponding control $\tilde{u}_{1}$. Similar to the case without the disturbance, while $\tilde{z}_{1}$ converges to the origin, $\sign\; \tilde{z}_{1}$ and $\tilde{u}_{1}$ converge to the discontinuity surface and oscillate between $+1$ and $-1$.

\begin{figure}[htpb]
\centering
\begin{minipage}{.5\textwidth}
\includegraphics [width=0.9\linewidth]{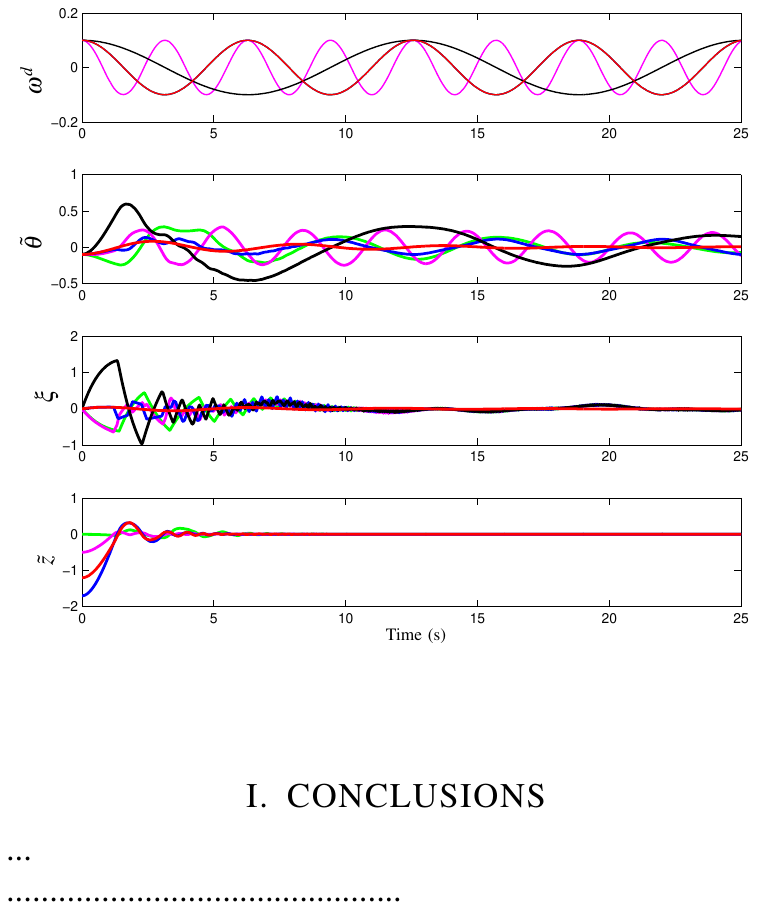}
\caption{The horizontal component of the disturbance $w^d$, the disturbance estimation error $\tilde \theta$ and the velocity error $\xi$ for all five agents, together with the horizontal component of the relative position vector $\tilde z$. Here, an internal-model-based controller is used to reject the harmonic disturbance. $\xi$ and $\tilde z$ converge to zero.}
\label{fig:f10}
\end{minipage}
\hfill
\begin{minipage}{.4\textwidth}
\centering
\includegraphics [width=0.9\linewidth]{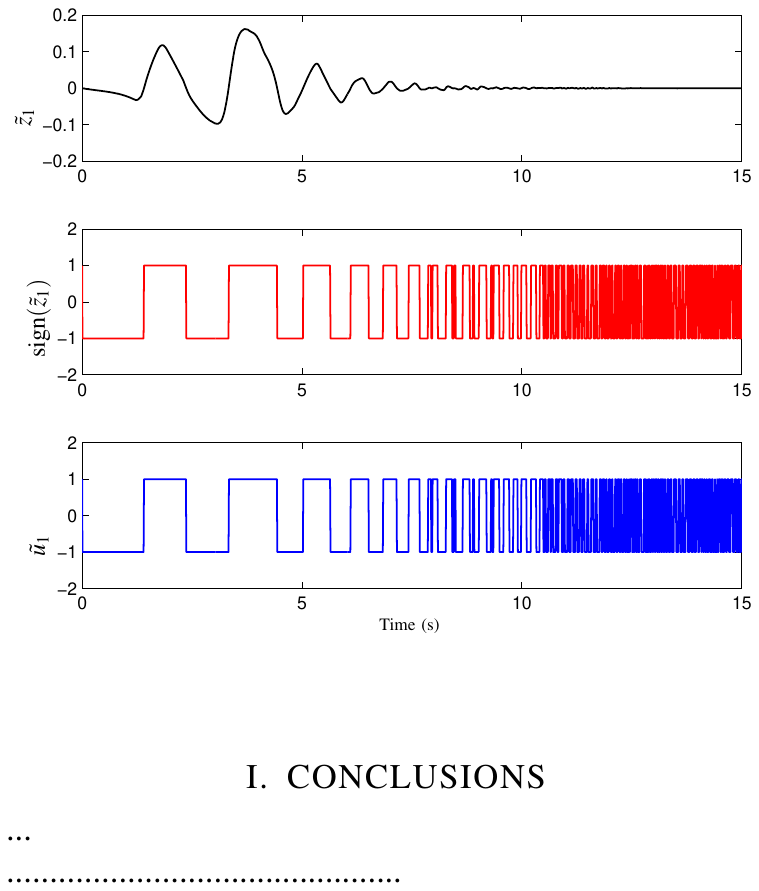}
\caption{The plot of the (horizontal) x-component of $\tilde{z}_{1}$, $\sign\; \tilde{z}_{1}$ and the corresponding control $\tilde{u}_{1}$ in the presence of harmonic disturbance.} 
\label{fig:f11}
\end{minipage}
\end{figure}

Simulation results concerning the controllers proposed in Section \ref{sec5} are shown in Figures \ref{fig:f12} and \ref{fig:f13}. The disturbance is assumed to be a linear combination of a constant and a harmonic signal 
for all five agents. The desired constant reference velocity, $v^{\ast}=[1\ \ 1]^{T}$, is only known to the formation leader (agent 1). In this example, the following are set as the parameters of the  disturbance and the controller:\\
$ \Phi^d_i = I_2 \otimes \left( \begin{array}{ccc}
0 & 0 & 0\\
0 & 0 & 2\\
0 & -2 & 0\end{array} \right)$,\; $\Gamma_i^d=\left( \begin{array}{cccccc}
0.5 & 0.5 & 0.5 &0 & 0 & 0\\
0 & 0 & 0 & 0.5 &-0.5 & 0.5 \end{array} \right)$,\;
$G_i^d={\Gamma _i^d}^{T}$,\;$w_i^d(0)=[0.1\ 0.1\ 0.1\ 0.1\ 0.1\ 0.1]^{T}$, for $i=1,\ldots,5$. Also, 
$\Phi=\mathbf{0}_{2\times2}$,\;$\Gamma ^v=I_2$,\;$G=I_2$,\;$w^v(0)=[1\ 1]^{T}$, and $\eta_{i}(0)=[0\ 0]^{T}$, for $i=2,\ldots,5$. 
The observer gain $H_i$ is set to $50\ I_2$, and the velocity error dynamics is $f_i(\xi_i)= -30\ I_2\; \xi_i$ and $g_i=10\ I_2$.
Figure \ref{fig:f12} shows the results when the disturbance is tackled by a controller based on the designed described in Section \ref{sec5}. The result confirms that the desired formation and the desired velocity are attained. In addition, the disturbance is rejected. Figure \ref{fig:f13} shows the time behavior of the horizontal component of $\tilde{z}_{1}$, $\sign\,\tilde{z}_{1}$ and the corresponding control $\tilde{u}_{1}$. Similar to the previous cases, while $\tilde{z}_{1}$ converges to the origin, $\sign\; \tilde{z}_{1}$ and $\tilde{u}_{1}$ oscillate between $+1$ and $-1$. This behavior may not be acceptable in practice and can be overcome by the hysteric quantizers studied in \cite{ceragioli.et.al.aut11} or the self-triggered controllers of \cite{depersis.frasca.necsys12}.

\begin{figure}[htpb]
\centering
\begin{minipage}{.5\textwidth}
\includegraphics [width=0.9\linewidth]{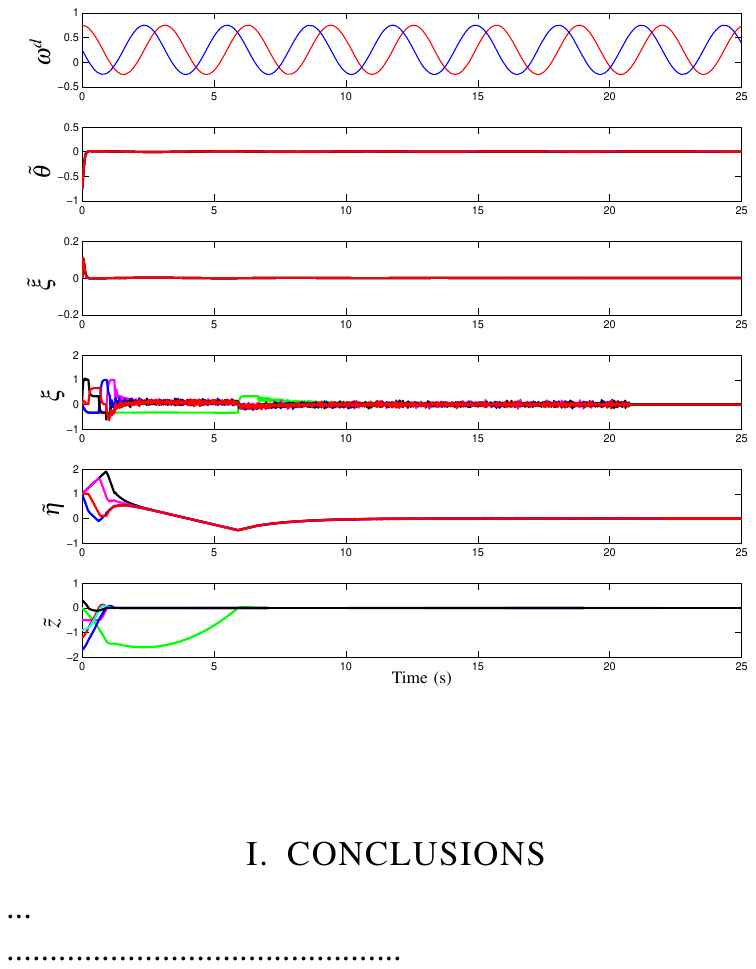}
\caption{The x-y components of the disturbance $w^d$ (chosen to be identical for all agents), together with the (horizontal) x-component of the disturbance estimation error $\tilde \theta$, the estimated velocity error $\tilde \xi$, the velocity error $\xi$, the reference velocity estimation error $\tilde \eta$ (for follower agents) and the horizontal component of the relative position vector $\tilde z$. Here, the disturbance is rejected by the controller designed in Section \ref{sec5}. As shown, $\tilde \theta$, $\tilde \xi$, $\xi$, $\tilde \eta$, $\tilde z$ converge to zero.}
\label{fig:f12}
\end{minipage}
\hfill
\begin{minipage}{.4\textwidth}
\centering
\includegraphics [width=0.9\linewidth]{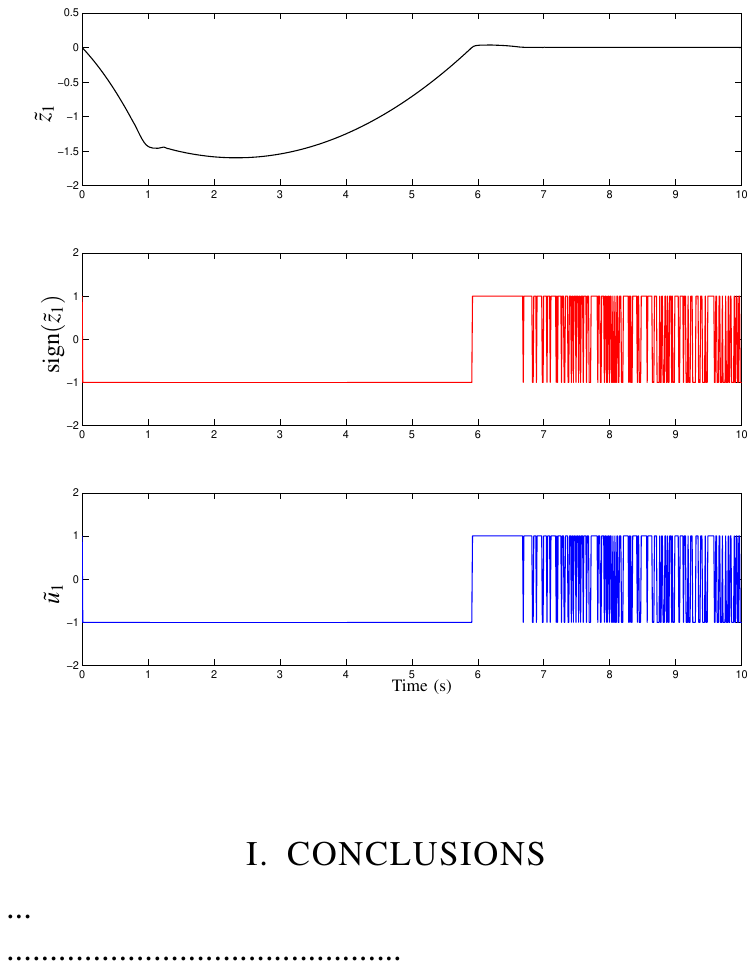}
\caption{The plot of the (horizontal) x-component of $\tilde{z}_{1}$, $\sign\; \tilde{z}_{1}$ and the corresponding control $\tilde{u}_{1}$ related to the simulation based on the controller proposed in  Section \ref{sec5}.}
\label{fig:f13}
\end{minipage}
\end{figure}

\section{Conclusion}
\label{sec:conclusion}
In this paper we considered a formation control problem with very coarse information for a network of strictly passive systems. We showed that despite the very coarse  information, the exact formation is reached. Moreover, the formation tracks a desired reference velocity even in the case when the reference velocity is only available to one of the agents (the so-called leader). 
In the same coarse sensing scenario and within the passivity framework, we designed internal-model-based controllers  for disturbance rejection and velocity tracking. 
\\ 
Possible future avenues of research include the extension of the results to deal with time-varying topologies. A few related results have been discussed in \cite{depersis.cdc11,xargay.et.al.cdc12}. Moreover, discontinuous control laws as those considered in this paper can be viewed as the outcome of  a non-smooth optimization problem associated with the original control problem (\cite{buerger.et.al.cdc11}) and it would be interesting to investigate this topic more in depth.  Another interesting topic to understand better is whether the finite-valued control laws of this paper can be used to tackle the case of (asymmetric) measurement noise. 
Finally, we observe that as the system converges to the prescribed formation, fast oscillations of the control inputs between $+1$ and $-1$ may occur. As discussed in the simulation section, these oscillations could be overcome by the hysteric quantizer of \cite{ceragioli.et.al.aut11} or the self-triggered approach of \cite{depersis.frasca.necsys12}. A comprehensive treatment of this aspect is another interesting topic that deserves attention. 


%

\appendix
\section {Non Smooth Control Theory Tools} 
{\it This appendix is added for the convenience of the Reviewers only. It will be omitted from the final version of the manuscript.}\\
We recall some basic notations from the theory of nonsmooth control which will be used throughout the paper. $x_{0} \in \R^{N}$ is a Krasowskii equilibrium for 
the differential inclusion $\dot x \in F(x(t))$ if the function $x(t) = x_{0}$ is a Krasowskii solution to $\dot x \in F(x(t))$ starting from the initial condition $x_{0}$, namely if ${\mathbf 0} \in {\mathcal K} (F(x))$. A set $\S$ is weakly (respectively, strongly) invariant for $\dot x \in F(x(t))$ if for any initial condition $\bar{x} \in \S$ at least one (all the) Krasowskii solution $x(t)$ starting form $\bar{x}$ belongs (belong) to $\S$ for all t in the domain of the definition of $x(t)$. Let $V$ be a locally Lipschitz continuous function. Recall that by Rademacher's theorem, the gradient $\nabla V$ of $V$ exists almost everywhere. Let $R$ be the set of measure zero where $\nabla V (x)$ does not exist. Then the Clarke generalized gradient of $V$ at $x$ is the set $\partial V(x) = co \{ \lim_{i \to +\infty} \nabla V(x_{i}) : x_{i} \rightarrow x, x_{i} \notin S, \;x_{i} \notin R\}$ where $S$ is any set of measure zero in $\R^{N}$. We define the set-valued derivative of $V$ at $x$ with respect to $(\ref{eq:b7})$ the set $\dot{\bar{V}}(x) = \{a \in \R : \exists v\in {\mathcal K} (F(x))\;\text{s.t.}\;a = p.v,\;\;\forall p \in \partial V(x)\}$.\\
{\bf Non-smooth LaSalle's invariance principle}. Let $V : \R^{n}\rightarrow \R$ be a locally Lipschitz and regular function. Let $\bar{x} \in \S$, with $\S$ compact and strongly invariant for $\dot x \in F(x(t))$. Assume that for all $\bar{x} \in \S$ either $\dot{\bar{V}} = \emptyset$ or $\dot{\bar{V}}\subseteq (-\infty,0]$. Then any Krasowskii solution to $\dot{x} \in F(x)$ starting from $\bar{x}$ converges to the largest weakly invariant subset contained in $\S \cap \{x \in \R^{n}: {\mathbf 0} \in \dot{\bar{V}}\}$, with $\mathbf 0$ the null vector in $\R^{n}$.
\end{document}